\newtheoremstyle{mydefn}
{}{}
{\it}  
{0pt}       
{\bfseries} 
{:}
{.5em}
{}          
\theoremstyle{mydefn}
\newtheorem{definition}{Definition}[section]
\newtheorem{theorem}{Theorem}[section]
\newtheorem{corollary}{Corollary}[section]
\newtheorem{fact}{Fact}[section]
\newtheorem{lemma}{Lemma}[section]
\newtheoremstyle{myexample}
{}{}
{}  
{0pt}       
{\bfseries} 
{:}
{.5em}
{}          
\theoremstyle{myexample}
\newtheorem{example}{Example}[section]
\theoremstyle{acmplain}
\renewcommand{\paragraph}[1]{\vspace{0.5em}\noindent\textbf{#1.}}
\renewcommand{\subparagraph}[1]{\vspace{0.5em}\noindent\textit{\underline{#1.}}}
\newcommand{\floor}[1]{\left\lfloor #1 \right\rfloor}
\newcommand{\norm}[1]{\left\Vert #1 \right\Vert}
\newcommand{\bracket}[1]{\{ #1 \}}
\newif\if@restonecol
  \providecommand\BibTeX{{
    \normalfont B\kern-0.5em{\scshape i\kern-0.25em b}\kern-0.8em\TeX}}}
\begin{document}

\title{Locality-Sensitive Hashing Scheme based on Longest Circular Co-Substring}


\author{Yifan Lei}
\affiliation{
  	\institution{National University of Singapore}
}
\email{leiyifan@u.nus.edu}

\author{Qiang Huang}
\authornote{Corresponding author.}
\affiliation{
  	\institution{National University of Singapore}
}
\email{huangq@comp.nus.edu.sg}

\author{Mohan Kankanhalli}
\affiliation{
  	\institution{National University of Singapore}
}
\email{mohan@comp.nus.edu.sg}

\author{Anthony K. H. Tung}
\affiliation{
  	\institution{National University of Singapore}
}
\email{atung@comp.nus.edu.sg}

\renewcommand{\shortauthors}{Y. Lei et al.}

\begin{abstract}
Locality-Sensitive Hashing (LSH) is one of the most popular methods for $c$-Approximate Nearest Neighbor Search ($c$-ANNS) in high-dimensional spaces. In this paper, we propose a novel LSH scheme based on the Longest Circular Co-Substring (LCCS) search framework (LCCS-LSH) with a theoretical guarantee. We introduce a novel concept of LCCS and a new data structure named Circular Shift Array (CSA) for $k$-LCCS search. The insight of LCCS search framework is that close data objects will have a longer LCCS than the far-apart ones with high probability. LCCS-LSH is \emph{LSH-family-independent}, and it supports $c$-ANNS with different kinds of distance metrics. We also introduce a multi-probe version of LCCS-LSH and conduct extensive experiments over five real-life datasets. The experimental results demonstrate that LCCS-LSH outperforms state-of-the-art LSH schemes.

\end{abstract}

%


\maketitle

\vspace{-0.5em}
\section{Introduction}
\label{sect:intro}
Nearest Neighbor Search (NNS) is a fundamental problem, and it has wide applications in various fields, such as data mining, multimedia databases, machine learning, and artificial intelligence. Given a distance metric, a database $\mathcal{D}$ of $n$ data objects and a query $q$ with feature representation in $d$-dimensional space $\mathbb{R}^d$, the aim of NNS is to find the object $o^* \in \mathcal{D}$ which is closest to $q$, where $o^*$ is called the Nearest Neighbor (NN) of $q$. 
The exact NNS in low-dimensional spaces has been well solved by tree-based methods \cite{guttman1984rtree,bentley1990k,katayama1997sr}. 
For high-dimensional NNS, due to the difficulty of finding exact solutions \cite{weber1998quantitative,hinneburg2000nearest}, the approximate version of NNS, named $c$-Approximate NNS ($c$-ANNS), has been widely studied in recent two decades \cite{kleinberg1997two, indyk1998approximate, fagin2003efficient, jagadish2005idistance, beygelzimer2006cover, jegou2010product, sun2014srs, wang2018randomized, malkov2018efficient, zhou2018generic, fu2019fast}. 

\paragraph{Prior Work}
Locality-Sensitive Hashing (LSH) \cite{indyk1998approximate,har2012approximate} and its variants \cite{broder1998min, gionis1999similarity, charikar2002similarity, datar2004locality, panigrahy2006entropy, andoni2006near, gan2012locality, huang2015query, andoni2015optimal, lei2019sublinear} are one of the most popular methods for high-dimensional $c$-ANNS. An LSH scheme consists of two components: the LSH function family (or simply LSH family) and the search framework. The idea of LSH families is to construct a family of hash functions such that the \emph{positive probability} $p_1$ of the close objects to be hashed into the same bucket with a query $q$ is higher than the \emph{negative probability} $p_2$ of the far-apart ones. Furthermore, the search framework aims to increase the gap between $p_1$ and $p_2$, so that the close objects can be identified efficiently. The most popular search frameworks are the static concatenating search framework \cite{datar2004locality, lv2007multi, tao2009quality, liu2014sk} and the dynamic collision counting framework \cite{gan2012locality, huang2015query, zheng2016lazylsh, huang2017query}. 

\begin{figure*}[t]
  \centering
  \subfigure[E2LSH]{
    \label{fig:example:e2lsh}
    \includegraphics[height=3.15cm]{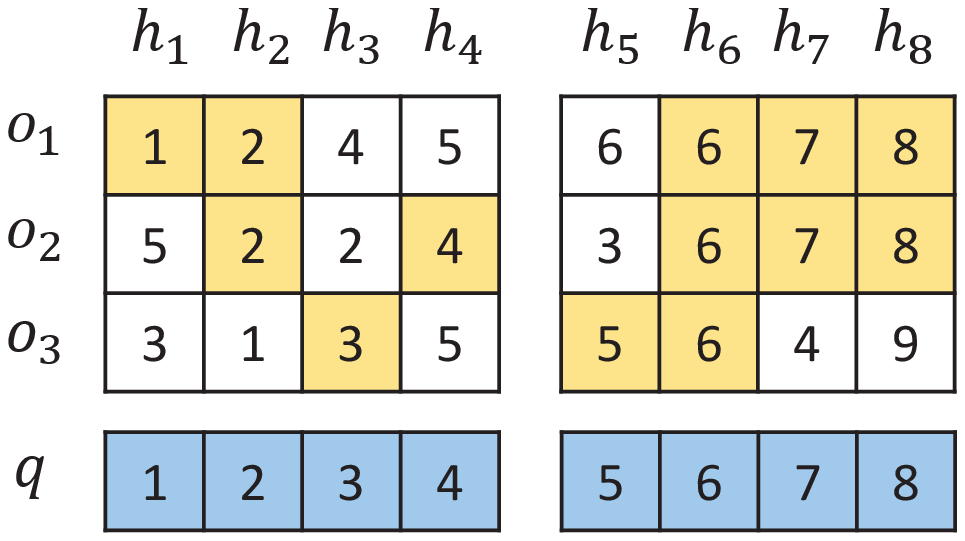}}
  \subfigure[C2LSH]{
    \label{fig:example:c2lsh}
    \includegraphics[height=3.15cm]{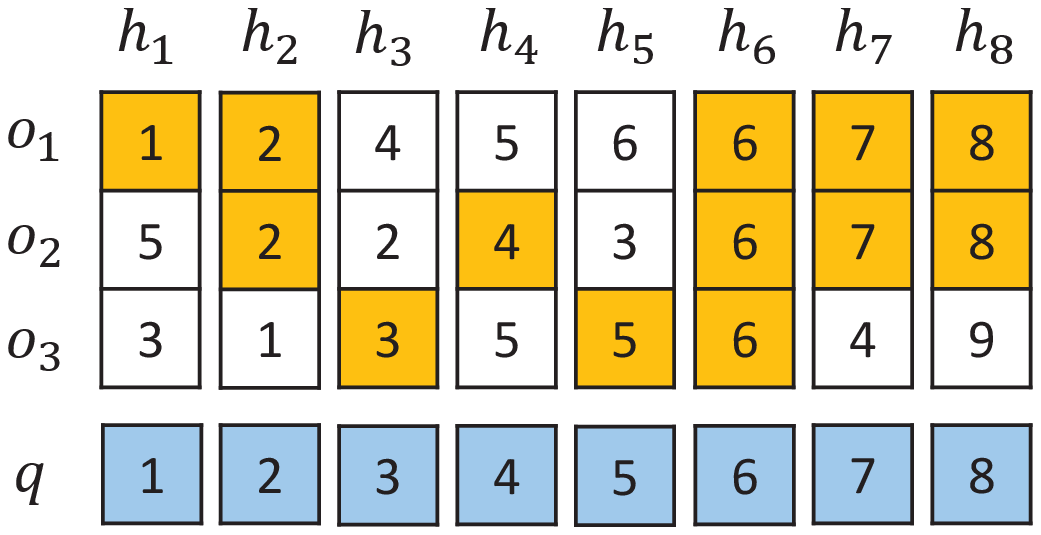}}
  \subfigure[LCCS-LSH]{
    \label{fig:example:lcslsh}
    \includegraphics[height=3.15cm]{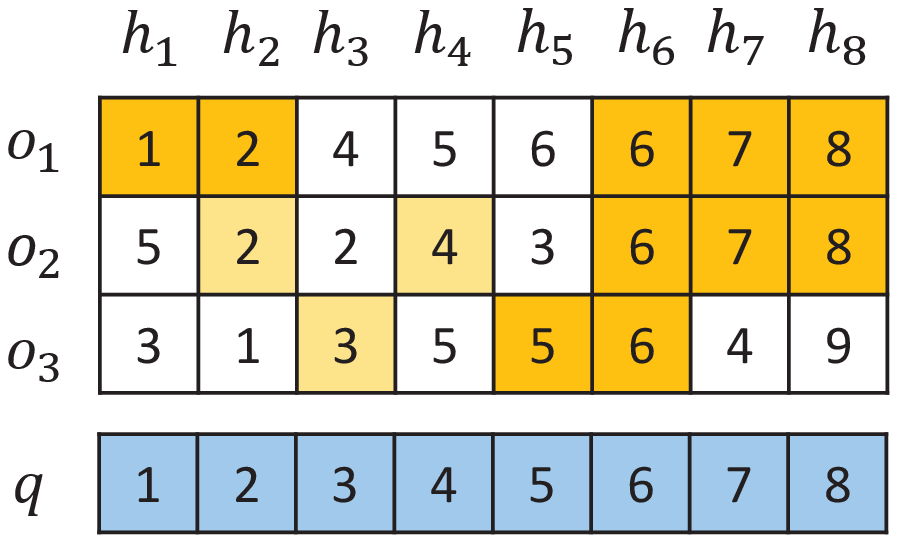}}
  \vspace{-1.0em}
  \caption{An example of the search frameworks of E2LSH, C2LSH, and LCCS-LSH}
  \vspace{-0.75em}
  \label{fig:example}
\end{figure*}

\subparagraph{Static Concatenating Search Framework}
The static concatenating search framework was first introduced by Indyk et al. \cite{indyk1998approximate} for Hamming distance, and later was extended to $l_p$ distance ($0 < p \leq 2$) by Datar et al. \cite{datar2004locality}, which led to E2LSH \cite{andoni2005e2lsh} for Euclidean distance ($p=2$). E2LSH adopts this framework as follows. In the indexing phase, E2LSH concatenates $K$ i.i.d. LSH functions $h_1,h_2,\cdots,h_K$ to form a compound hash function $G$, i.e., $G(o) = (h_1(o),h_2(o),\cdots,h_K(o))$ for all $o \in \mathcal{D}$. If two objects $o$ and $q$ have the same hash value, i.e., $G(o) = G(q)$, we say $o$ and $q$ \emph{collide} in the same bucket under $G$. E2LSH samples uniformly at random $L$ such hash functions $G_1(\cdot),G_2(\cdot),\cdots,G_L(\cdot)$ and builds $L$ hash tables. In the query phase, E2LSH computes $L$ hash values $G_1(q),G_2(q),\cdots,G_L(q)$ and lookups the corresponding $L$ buckets to find the candidates of $q$. The variants of E2LSH, such as LSH-Forest \cite{bawa2005lsh}, Multi-Probe LSH \cite{lv2007multi}, LSB-Forest \cite{tao2009quality}, and SK-LSH \cite{liu2014sk}, follow this search framework.

Notably, E2LSH conducts the $c$-ANNS with sublinear time $O(dn^\rho\log_{1/p_2}(n))$, where $\rho = \ln(1/p_1) / \ln(1/p_2)$ \cite{datar2004locality}. The reason is that the static concatenating search framework can effectively avoid the false positives in the sense that the far-apart objects hardly collide with $q$. Due to the use of $K$ concatenated LSH functions, such negative probability decreases significantly from $p_2$ to $p_2^K$. However, the positive probability also decreases significantly from $p_1$ to $p_1^K$, and hence the true positives are not easy to be identified neither. For example, as shown in Figure \ref{fig:example:e2lsh}, suppose $o_1$ is the NN of $q$, $o_2$ is also close to $q$, while $o_3$ is far-apart from $q$. We consider $K=4$ and $L=2$. Due to the use of this framework, $o_3$ does not collide with $q$, but the close objects $o_1$ and $o_2$ also fail to collide with $q$. To achieve a certain recall, the number of hash tables (i.e., $L$) of E2LSH is often set to be more than one hundred, and sometimes up to several hundred \cite{gan2012locality}, leading to a large amount of indexing overhead. 

\subparagraph{Dynamic Collision Counting Framework}
To reduce the large indexing overhead, Gan et al. \cite{gan2012locality} introduced a dynamic collision counting framework and the C2LSH scheme accordingly. In the indexing phase, C2LSH uses $m$ independent LSH functions $h_1,h_2,\cdots,h_m$ to build $m$ hash tables individually. Two objects $o$ and $q$ \emph{collide} in the same bucket under $h$ if $h(o) = h(q)$. The idea of C2LSH is that, if $o$ is close to $q$ in the original space $\mathbb{R}^d$, then $o$ and $q$ will collide \emph{frequently} among the $m$ hash tables. Thus, in the query phase, C2LSH maintains the collision number $\#Col(o)$ for each $o$ which collides with $q$, and $o$ is considered as an NN candidate of $q$ if $\#Col(o) \geq l$, where $l$ is the collision threshold. C2LSH returns the final answers from a set of such candidates. In fact, this framework can be considered as a \emph{dynamic} $l$-concatenating search framework, because it checks a candidate $o$ until $\#Col(o) \geq l$. Compared to the static concatenating search framework which uses $KL$ LSH functions to generate $L$ combinations only, this framework can generate $\tbinom{m}{l}$ combinations for each $o$. 
Thus, for the same recall, C2LSH requires much less number of LSH functions than E2LSH, and thus takes much less indexing overhead. Various extensions, such as QALSH \cite{huang2015query,huang2017query} and LazyLSH \cite{zheng2016lazylsh}, are proposed based on this framework.

However, the query time complexity of C2LSH in the worst case is $O(n\log n)$ \cite{gan2012locality}, which limits its scalability for large $n$. Notice that C2LSH builds hash tables for every single LSH function. Even though $p_2$ is small for the far-apart objects, there are expected $(1-(1-p_2)^m)n\approx p_2 m n$ objects with at least one collision, which cannot be neglected especially for large $n$. 
For example, as shown in Figure \ref{fig:example:c2lsh}, suppose $m=8$ and $l=4$, C2LSH can identify the close objects $o_1$ and $o_2$ since $\#Col(o_1) = \#Col(o_2) > l$, but it also conducts 3 times collision counting for the far-apart object $o_3$. 

\vspace{-0.15em}
\paragraph{Our Method}
To achieve a better trade-off between space and query time, we introduce a novel LSH scheme based on the \emph{Longest Circular Co-Substring} (LCCS) search framework (LCCS-LSH). We first introduce a novel concept of LCCS and a new data structure named Circular Shift Array (CSA) for $k$-LCCS search. Then, in the indexing phase, we exploit a collection of $m$ independent LSH functions $h_1,h_2,\cdots,h_m$ to convert data objects into hash strings of length $m$, i.e., $H(o) = [h_1(o),h_2(o),\cdots,h_m(o)]$. The insight is that, if $o$ is close to $q$ in $\mathbb{R}^d$, then $H(o)$ will have a longer LCCS with $H(q)$ than the hash strings for the far-apart ones with high probability. Let $|LCCS(H(o),H(q))|$ be the length of LCCS between $H(o)$ and $H(q)$. In the query phase, we find data objects with the largest $|LCCS(H(o),H(q))|$ as candidates of $q$ and get the final answers from a set of such candidates. 
For example, as shown in Figure \ref{fig:example:lcslsh}, suppose $m=8$ and we combine the $8$ hash values for each object as a circular hash string. $|LCCS(H(o_1),H(q))|=5$, which is larger than $|LCCS(H(o_2),H(q))|$ and $|LCCS(H(o_3),H(q))|$, which are $3$ and $2$, respectively.
Thus, the NN $o_1$ can be determined efficiently. Furthermore, since LCCS-LSH works on the hash strings only, which is independent of data types, it is \emph{LSH-family-independent} and can be applied to $c$-ANNS under different distance metrics that admit LSH families.

\vspace{-0.15em}
\paragraph{Contributions}
In this paper, we introduce a novel LSH scheme LCCS-LSH for high-dimensional $c$-ANNS. The LCCS search framework dynamically concatenates consecutive hash values for data objects, which can identify the close objects in an efficient and effective manner and it requires to tune only a single parameter $m$. LCCS-LSH enjoys a quality guarantee on query results, and we further analyse its space and time complexities. In addition, we introduce a multi-probe version of LCCS-LSH to reduce the indexing overhead. Experimental results over five real-life datasets demonstrate that LCCS-LSH outperforms state-of-the-art LSH schemes, such as Multi-Probe LSH and FALCONN. 

\paragraph{Organization}
The roadmap of the paper is as follows. 
Section \ref{sect:preliminary} discusses the problem settings.
The LCCS search framework is introduced in Section \ref{sect:framework}. LCCS-LSH and its theoretical analysis are presented in Sections \ref{sect:methods} and \ref{sect:analysis}, respectively. Section \ref{sect:expt} reports experimental results. Section \ref{sect:related_work} surveys the related work. Finally, we conclude our work in Section \ref{sect:conclusion}.

\section{Preliminaries}
\label{sect:preliminary}
Before we introduce the LCCS-LSH scheme, we first review some preliminary knowledge.
\vspace{-0.5em}

\subsection{Problem Settings}
\label{sect:preliminary:problem}
In this paper, we consider data objects and queries represented as vectors in $d$-dimensional space $\mathbb{R}^d$. Let $Dist(o,q)$ be a distance metric between any two objects $o$ and $q$. Suppose $\mathcal{D}$ is a database of $n$ data objects from $\mathbb{R}^d$. Given a query $q$, we say $o^*$ is the Nearest Neighbor (NN) of $q$ such that $o^* = \arg\min_{o\in \mathcal{D}} Dist(o,q)$. Then, 
\begin{definition}[$c$-ANNS]
Given an approximation ratio $c$ ($c > 1$), the problem of $c$-ANNS is to construct a data structure which, for any query $q \in \mathbb{R}^d$, finds a data object $o \in \mathcal{D}$ such that $Dist(o,q) \leq c \cdot Dist(o^*,q)$, where $o^* \in \mathcal{D}$ is the NN of $q$.
\end{definition} 

Similarly, the problem of $c$-$k$-ANNS is to construct a data structure which, for any query $q \in \mathbb{R}^d$, finds $k$ data objects $o_i \in \mathcal{D}$ ($1 \leq i \leq k$) such that $Dist(o_i,q) \leq c \cdot Dist(o_i^*,q)$, where $o_i^* \in \mathcal{D}$ is the $i^{th}$ NN of $q$. 

LSH schemes \cite{indyk1998approximate,charikar2002similarity,datar2004locality,andoni2006near} cannot solve the problem of $c$-ANNS directly. Instead, they solve the problem of $(R,c)$-Near Neighbor Search ($(R,c)$-NNS), which is a decision version of $c$-ANNS. One can reduce the $c$-ANNS problem to a series of $(R,c)$-NNS via a binary-search-like method within a log factor overhead, where $R \in \bracket{1,c,c^2,\cdots}$. Formally, 
\begin{definition}[$(R,c)$-NNS]
Given a search radius $R$ ($R > 0$) and an approximation ratio $c$ ($c > 1$), the problem of $(R,c)$-NNS is to construct a data structure which, for any $q \in \mathbb{R}^d$, returns objects that satisfy the following conditions: 
\begin{itemize}
\item If there is an object $o \in \mathcal{D}$ such that $Dist(o,q) \leq R$, then return an arbitrary object $o^\prime$ such that $Dist(o^\prime,q) \leq cR$;
\item If $Dist(o,q) > cR$ for all $o \in \mathcal{D}$, then return nothing;
\item Otherwise, the result is undefined.
\end{itemize}
\end{definition}

LCCS-LSH is orthogonal to the LSH family and can handle various kinds of distance metrics. Thus, $Dist(\cdot,\cdot)$ can be the widespread distance metrics, such as Euclidean distance, Hamming distance, Angular distance, and so on. In this paper, we focus on two popular distance metrics, i.e., Euclidean distance and Angular distance, to demonstrate the superior performance of LCCS-LSH. Notice that we do \emph{not} claim that every distance metric can be handled by LCCS-LSH. It supports the distance metrics if and only if there exist LSH families for them. 

\subsection{Locality-Sensitive Hashing}
\label{sect:preliminary:LSH}
LSH schemes \cite{indyk1998approximate, charikar2002similarity, datar2004locality, andoni2006near, terasawa2007spherical, har2012approximate, andoni2015optimal} are one of the most popular methods for $c$-ANNS. Given a hash function $h$, we say two objects $o$ and $q$ collide in the same bucket if $h(o) = h(q)$. Formally, an LSH family is defined as follows \cite{har2012approximate}.
\begin{definition}[LSH Family] 
\label{def:lsh_family}
Given a search radius $R$ ($R > 0$) and an approximation ratio $c$, a hash family $\mathcal{H} = \{h:\mathbb{R}^d \rightarrow \mathbb{U}\}$ is said to be $(R,cR,p_1,p_2)$-sensitive, if for any $o,q \in \mathbb{R}^d$, $\mathcal{H}$ satisfies the following conditions:
\begin{itemize}
\item If $Dist(o,q) \leq R$,  then $\Pr_{h \in \mathcal{H}} [h(o) = h(q)] \geq p_1$;
\item If $Dist(o,q) > cR$, then $\Pr_{h \in \mathcal{H}} [h(o) = h(q)] \leq p_2$;
\item $c > 1$ and $p_1 > p_2$.
\end{itemize}
\end{definition}

With an LSH family $\mathcal{H}$, we have Theorem \ref{theorem:time-complexity} for the static concatenating search framework as follows \cite{har2012approximate}.
\begin{theorem}[Theorem 3.4 in \cite{har2012approximate}]
\label{theorem:time-complexity}
Given an $(R,cR,p_1,p_2)$-sensitive hash family $\mathcal{H}$, one can build a data structure for the $(R,c)$-NNS which uses $O(n^{1+\rho}/p_1)$ space and $O(dn^\rho/p_1 \cdot \lceil \log_{1/{p_2}}(n) \rceil)$ query time, where $\rho=\ln(1/p_1)/\ln(1/p_2)$.
\end{theorem}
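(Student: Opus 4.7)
The plan is to follow the classical Indyk--Motwani construction, instantiated for an $(R,cR,p_1,p_2)$-sensitive family $\mathcal{H}$. First I would build compound hash functions by concatenating $K$ i.i.d.\ samples from $\mathcal{H}$, i.e.\ $G(o) = (h_1(o),\ldots,h_K(o))$, and draw $L$ such independent compound functions $G_1,\ldots,G_L$. Each $G_i$ defines one hash table in which every $o \in \mathcal{D}$ is stored under the key $G_i(o)$. The whole data structure therefore uses $O(nL)$ words plus $O(LK)$ words for the hash functions. At query time, for each $i$ I would look up the bucket $G_i(q)$, scan the points it contains, and return the first object whose true distance to $q$ is at most $cR$, aborting the scan across all tables once a fixed threshold (say $3L$) of candidates has been inspected.

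Next I would tune the two parameters. Setting $K = \lceil \log_{1/p_2} n \rceil$ gives $p_2^K \le 1/n$, so for any single far object $o$ with $\mathit{Dist}(o,q)>cR$ the collision probability under one $G_i$ is at most $1/n$; in expectation each table therefore produces at most one spurious collision, and across the $L$ tables the expected number of far candidates is at most $L$. Setting $L = \lceil n^\rho/p_1 \rceil$ with $\rho = \ln(1/p_1)/\ln(1/p_2)$ uses the identity $p_1^K = p_1^{\log_{1/p_2} n} = n^{-\rho}$; hence for any near object $o$ with $\mathit{Dist}(o,q)\le R$,
\begin{equation*}
\Pr\bigl[\exists i: G_i(o)=G_i(q)\bigr] \;\ge\; 1 - (1-p_1^K)^L \;\ge\; 1 - e^{-L p_1^K} \;\ge\; 1 - e^{-1/p_1},
\end{equation*}
which is a constant bounded away from zero.

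Then I would combine the two tail bounds. By Markov's inequality the total number of far collisions across all $L$ buckets exceeds $3L$ with probability at most $1/3$, and by the display above a witness for a true $R$-near neighbor is missed with probability at most $1/e$. A union bound leaves constant success probability, which can be boosted to any desired level by $O(1)$ independent repetitions without changing the asymptotics. Given success, the query cost decomposes into $O(dLK)$ for computing the $LK$ hash values on $q$, $O(L)$ for scanning the aborted candidate list, and $O(dL)$ for evaluating $\mathit{Dist}(\cdot,q)$ on those candidates. Substituting $K$ and $L$ yields query time $O(dn^\rho/p_1 \cdot \lceil \log_{1/p_2} n \rceil)$ and space $O(n^{1+\rho}/p_1)$, matching the claim.

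The main obstacle is the probabilistic coupling in the last paragraph: the ``false positives are few'' event and the ``true positive collides somewhere'' event must both hold for the bounded scan to return a valid answer. Arguing this cleanly requires being careful that the abort threshold is chosen independently of which objects are near, so that conditioning on a near witness existing does not inflate the expected number of far collisions; once the threshold is fixed as a deterministic constant multiple of $L$, the two tail bounds decouple and a union bound suffices.
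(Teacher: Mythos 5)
Your proposal is correct and is essentially the standard Indyk--Motwani argument that the paper itself only cites (Theorem~3.4 of \cite{har2012approximate}): concatenate $K=\lceil\log_{1/p_2}n\rceil$ functions, build $L=O(n^{\rho}/p_1)$ tables, bound far collisions by Markov and near misses by $(1-p_1^K)^L$, and abort after $O(L)$ candidates. The only cosmetic slip is that the $1/p_1$ factor in $L$ exists precisely to absorb the ceiling on $K$ (so $p_1^K\ge p_1 n^{-\rho}$ and $Lp_1^K\ge 1$, giving success probability $\ge 1-e^{-1}$ rather than $1-e^{-1/p_1}$), which does not affect the stated bounds.
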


Next, we review two LSH families, i.e., the random projection LSH family \cite{datar2004locality} and the cross polytope LSH family \cite{terasawa2007spherical}, for Euclidean distance and Angular distance, respectively.

\paragraph{Random Projection LSH Family}
The random projection LSH family \cite{datar2004locality} is designed for Euclidean distance. Given two objects $o=(o_1,o_2,\cdots,o_d)$ and $q=(q_1,q_2,\cdots,q_d)$, Euclidean distance is computed as $\norm{o-q}=\sqrt{\sum_{i=1}^d (o_i - q_i)^2}$. The LSH function is defined as follows:
\begin{equation}
\label{eqn:l2-lsh-func}
h_{\vec{a},b}(o)=\floor{\frac{\vec{a} \cdot \vec{o} + b}{w}}, 
\end{equation}
where $a$ is a $d$-dimensional vector with each entry chosen i.i.d from standard Gaussian distribution $\mathcal{N}(0, 1)$; $w$ is a pre-specified bucket width; $b$ is a random offset chosen uniformly at random from $[0,w)$. 

Given any two objects $o, q \in \mathbb{R}^d$, let $\tau = \norm{o-q}$. The collision probability $p(\tau)$ is computed as follows \cite{datar2004locality}: 
\begin{equation}
\label{eqn:l2-col-prob}
\begin{array}{rcl}
p(\tau)
&=& \Pr[h_{\vec{a},b}(o)=h_{\vec{a},b}(q)] \\
&=& 1 - 2\Phi(-w/\tau) - \frac{2}{\sqrt{2\pi}(w/\tau)}(1 - e^{-{(w/\tau)}^2/2}),
\end{array}
\end{equation}
where $\Phi(x) = \int_{-\infty}^x \frac{1}{\sqrt{2\pi}} e^{-x^2/2}\, dx$ is the Cumulative Distribution Function (CDF) of $\mathcal{N}(0, 1)$.

\paragraph{Cross Polytope LSH Family}
Let $\mathcal{S}^{d-1}$ be the unit sphere in $\mathbb{R}^d$ centered in the origin. The cross polytope LSH family \cite{terasawa2007spherical} is designed for the Euclidean distance on $\mathcal{S}^{d-1}$, which is equivalent to the Angular distance. Given two objects $o=(o_1,o_2,\cdots,o_d)$ and $q=(q_1,q_2,\cdots,q_d)$, Angular distance is computed as $\theta(o,q) = cos^{-1}(\frac{\vec{o} \cdot \vec{q}}{\norm{o} \norm{q}})$. 

The cross polytope LSH family has been shown to outperform the hyperplane LSH family \cite{charikar2002similarity} and achieves the asymptotically optimal hash quality $\rho$ \cite{terasawa2007spherical,andoni2015practical}. Let $A \in \mathbb{R}^{d \times d}$ be a random rotation matrix with each entry drawn i.i.d from $\mathcal{N}(0, 1)$. Suppose $e_i$ is the $i^{th}$ standard basis vector of $\mathbb{R}^d$ and $u_j \in \{\pm e_i\}_{1 \leq i \leq d}$. Given any object $o \in \mathcal{S}^{d-1}$, i.e., $\norm{o}=1$, the LSH function is defined as follows:
\begin{equation}
\label{eqn:angular-lsh-func}
h_{A}(o) = \arg\min_j \norm{u_j - A\cdot o/\norm{A\cdot o} }.
\end{equation}

Given any two objects $o, q \in \mathcal{S}^{d-1}$, let $\tau = \norm{o-q}$, where $0 < \tau < 2 $. The collision probability $p(\tau)$ can be computed as follows \cite{andoni2015practical}:
\begin{equation}
\label{eqn:angular-col-prob}
\ln \frac{1}{p(\tau)} = \frac{\tau^2}{4-\tau^2} \cdot \ln d + O_{\tau}(\ln \ln d),
\end{equation}
and the hash quality $\rho$ can be computed as follows \cite{andoni2015practical}:
\begin{equation}
\label{eqn:angular-rho}
\rho = \frac{1}{c^2} \cdot \frac{4-c^2 R^2}{4 - R^2} + o(1).
\end{equation}

\section{The LCCS Search Framework}
\label{sect:framework}
In this section, we present the LCCS search framework. We introduce the concepts of LCCS and $k$-LCCS search in Section \ref{sect:framework:def}. Then, we propose a novel data structure Circular Shift Array (CSA) for $k$-LCCS search in Section \ref{sect:framework:k-lccs}. 

\subsection{Definition of LCCS} 
\label{sect:framework:def}
We first introduce the definition of \emph{Circular Co-Substring}. It can be considered as the common circular substring of two strings starting from the same position. Formally, 
\begin{definition}
\label{def:ccs}
Given two strings $T=[t_1,t_2,\cdots,t_m]$ and $Q=[q_1,q_2,\cdots,q_m]$ of the same length $m$, a string $X$ is a Circular Co-Substring of $T$ and $Q$ if and only if $X$ is an empty string, $X=[t_i,t_{i+1},\cdots,t_j]=[q_i,q_{i+1},\cdots,q_j]$, or $X=[t_j,\cdots,t_m,t_1,\cdots,t_i]=[q_j,\cdots,q_m,$ $q_1,\cdots,q_i]$, where $1 \leq i < j \leq m$. 
\vspace{-0.5em}
\end{definition}

\begin{algorithm*}[t]
\caption{Building CSA}
\label{alg:indexing}
\KwIn{$\mathcal{T}$: a dataset of $n$ strings of length $m$ such that $\mathcal{T} = \{T_1,T_2,\cdots,T_n\}$ and $|T_i| = m$.}
\KwOut{$m$ sorted indices $\{I_1,I_2,\cdots,I_m\}$ and $m$ next links $\{N_1, N_2, \cdots, N_m\}$.}
\For {$i = 1$ to $m$} {
	$I_i=\arg sort(shift(\mathcal{T}, i-1))$\Comment*[r]{$I_i$ is the sorted index of $n$ strings in $shift(\mathcal{T}, i-1)$}
}
\For {$i = 1$ to $m$} {
	\For {$j = 1$ to $n$}{
		$pos[I_{i\%m + 1}[j]] = j$\Comment*[r]{$pos$ is the position of $n$ strings in the next sorted index $I_{i\%m + 1}$}
	}
	\For {$j = 1$ to $n$}{
		$N_i[j] = pos[I_i[j]]$\Comment*[r]{$N_i$ is the position of $n$ strings in the next sorted $shift(\mathcal{T},i\%m)$}
	}
}
\Return $\{I_1,I_2,\cdots,I_m\}$ and $\{N_1, N_2, \cdots, N_m\}$\;
\end{algorithm*}

\begin{example}
\label{exmp:ccs}
Consider two strings $T = [1,2,3,4,1,5]$ and $Q = [1,1,2,3,4,5]$ as an example. The substring $[5,1]$ is a Circular Co-Substring of $T$ and $Q$. However, although the substring $[1,2,3,4]$ is a common circular substring of $T$ and $Q$, it is not a Circular Co-Substring, because it does not start from the same position of $T$ and $Q$.
\hfill $\triangle$ \par 
\end{example}

Let $|T|$ be the length of a string $T$. The Longest Circular Co-Substring (LCCS) is defined as follows.
\begin{definition}
\label{def:lccs}
Given any two strings $T$ and $Q$ of the same length, let $\mathcal{S}(T,Q)$ be the set of all Circular Co-Substrings of $T$ and $Q$. The LCCS of $T$ and $Q$ is defined as $LCCS(T,Q)=\arg\max_{X \in \mathcal{S}(T,Q)} |X|$.
\end{definition}

The problem of $k$-Longest Circular Co-Substring search ($k$-LCCS search) is defined as follows.
\begin{definition}
\label{def:k-lccs}
Given a collection of strings $\mathcal{T}$ of the same length $m$, the problem of $k$-LCCS search is to construct a data structure which, for any query string $Q$ that $|Q|=m$, finds a set of strings $\mathcal{T}^* \subseteq \mathcal{T}$ with cardinality $k$ such that for all $T^* \in \mathcal{T}^*, T' \in \mathcal{T} \backslash \mathcal{T}^*$, $|LCCS(T',Q)| \leq |LCCS(T^*,Q)|$.
\end{definition}

\subsection{$k$-LCCS Search}
\label{sect:framework:k-lccs}
Suppose $LCP(T,Q)$ is the Longest Common Prefix (LCP) between two strings $T$ and $Q$. Given a string $T=[t_1,t_2,\cdots,t_m]$ and an integer $i \in \{0,1,\cdots,m-1\}$, let $shift(T,i)=[t_{i+1},$ $\cdots,t_m,t_1,\cdots,t_{i}]$ be the circular string of $T$ after shifting $i$ positions. Since the index of $T$ starts from $1$, $shift(T, i-1)$ corresponds to the circular string of $T$ starting from $t_{i}$. For simplicity, for a collection of strings $\mathcal{T}$, we let $shift(\mathcal{T},i) = \{shift(T,i) \mid T \in \mathcal{T}\}$. 

To solve the problem of $k$-LCCS search, we propose a data structure named \emph{Circular Shift Array (CSA)}, which is inspired by Suffix Array \cite{manber1993suffix}. Specifically, the insight of CSA comes from Fact \ref{fact:lcp}, which is described as follows.
\begin{fact}
\label{fact:lcp}
Given two strings $T$ and $Q$ of the same length $m$, $LCCS(T$, $Q) = \max_{i \in \{0,1,\cdots,m-1\}} LCP(shift(T,i), shift(Q,i))$.
\end{fact}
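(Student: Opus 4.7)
The plan is to establish the claimed equality by a pair of matching inequalities between the lengths, leveraging a natural bijection between Circular Co-Substrings of $T, Q$ and common prefixes of circularly shifted versions of $T, Q$.

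First I would unpack Definition~\ref{def:ccs}: every non-empty Circular Co-Substring of $T$ and $Q$ is uniquely determined by a starting position $i$ and a length $\ell \geq 1$, and corresponds exactly to the condition that the length-$\ell$ prefixes of $shift(T, i-1)$ and $shift(Q, i-1)$ coincide (with indices read modulo $m$ in the wrap-around case). This equivalence is the central observation; both inequalities then follow by applying it in opposite directions.

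For the $\leq$ direction, let $X = LCCS(T,Q)$ with starting position $i$ and length $\ell = |X|$. By the correspondence above, $X$ is a common prefix of $shift(T, i-1)$ and $shift(Q, i-1)$, so $\ell \leq |LCP(shift(T,i-1), shift(Q,i-1))| \leq \max_{j \in \{0,\ldots,m-1\}} |LCP(shift(T,j), shift(Q,j))|$. For the reverse direction, fix the $i$ attaining the right-hand maximum and let $P = LCP(shift(T,i), shift(Q,i))$. Applying the same correspondence in the other direction (with starting position $i+1$ in the original strings), $P$ is itself a Circular Co-Substring of $T$ and $Q$, so $|P| \leq |LCCS(T,Q)|$. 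Combining the two inequalities yields equality of lengths, and the identity of the realizing strings follows from the same correspondence.

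The main obstacle is purely bookkeeping: Definition~\ref{def:ccs} splits Circular Co-Substrings into two syntactic cases according to whether the substring wraps around the end, whereas the $shift$ operation treats both cases uniformly as prefixes of a rotation. Writing all indices modulo $m$ collapses the two cases, after which the statement ``$[t_i, t_{i+1}, \ldots]$ agrees with $[q_i, q_{i+1}, \ldots]$ on an initial segment of length $\ell$'' becomes literally the same as ``$shift(T, i-1)$ and $shift(Q, i-1)$ share a prefix of length $\ell$''. Once this translation is in place, each inequality is a one-line application of the correspondence, and no further combinatorial argument is needed.
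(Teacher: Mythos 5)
Your argument is correct and is exactly the definitional unpacking the paper implicitly relies on (the paper states Fact~\ref{fact:lcp} without proof, treating it as immediate): the two-sided inequality via the correspondence between a Circular Co-Substring starting at position $i$ and a common prefix of $shift(T,i-1)$ and $shift(Q,i-1)$ is the intended justification. No gaps; the modular-index bookkeeping you note is the only content, and you handle it properly.
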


According to Fact \ref{fact:lcp}, the $LCCS(T,Q)$ can be identified by considering the LCP of all shifted $T$'s and $Q$'s. Let $\prec$ and $\preceq$ be the alphabetical order relationships of strings, where $\preceq$ allows equal cases. We have Fact \ref{fact:prefix} as follows.
\begin{fact}
\label{fact:prefix}
If $T_1 \preceq T_2 \prec T_3$, then $\forall Q$, 
\begin{displaymath}
|LCP(T_2, Q)| \geq \min(|LCP(T_1, Q)|, |LCP(T_3, Q)|).
\end{displaymath}
\end{fact}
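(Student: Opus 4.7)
The plan is to set $\ell = \min(|LCP(T_1,Q)|,|LCP(T_3,Q)|)$ and prove that $T_2$ must agree with $Q$ on every one of its first $\ell$ positions, which immediately gives $|LCP(T_2,Q)| \geq \ell$. The whole statement is purely about lexicographic order, so no properties of the circular-shift construction are needed; it is a lemma about arbitrary equal-length (or even arbitrary) strings under $\preceq$.

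First I would observe the ``sandwich'' fact: since $T_1$ and $T_3$ each share a prefix of length at least $\ell$ with $Q$, both of them begin with exactly the same $\ell$-character string, namely $Q[1\mathinner{.\,.}\ell]$. This is the key reduction: the two outer strings coincide with $Q$ on the prefix under consideration, so I only need to argue about $T_2$'s relationship to this common prefix.

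Next I would proceed by contradiction: suppose $T_2$ disagrees with $Q$ somewhere in positions $1,\ldots,\ell$, and let $i$ be the smallest such index. Then $T_1[1\mathinner{.\,.}i-1] = T_2[1\mathinner{.\,.}i-1] = T_3[1\mathinner{.\,.}i-1] = Q[1\mathinner{.\,.}i-1]$, while $T_1[i] = T_3[i] = Q[i] \neq T_2[i]$. Split on whether $T_2[i] < Q[i]$ or $T_2[i] > Q[i]$: the first case forces $T_2 \prec T_1$, contradicting $T_1 \preceq T_2$; the second forces $T_3 \prec T_2$, contradicting $T_2 \prec T_3$. Either way we get a contradiction, so no such $i$ exists, and $T_2$ agrees with $Q$ through position $\ell$.

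I do not anticipate a real obstacle here; the argument is elementary. The only care needed is at the boundaries: handling $\ell = 0$ (trivial, the bound is vacuous), making sure the sandwich uses the non-strict side $T_1 \preceq T_2$ in the first case and the strict side $T_2 \prec T_3$ (or equivalently $T_2 \preceq T_3$) in the second, and being explicit that the comparison at index $i$ determines the lexicographic order because all earlier characters coincide. These are the spots where an off-by-one or an incorrect direction of inequality could slip in, and they are what the writeup should make airtight.
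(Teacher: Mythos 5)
Your proof is correct: the sandwich argument (both outer strings match $Q$ on the first $\ell$ characters, so the lexicographically intermediate $T_2$ must as well, else the first disagreement would violate $T_1 \preceq T_2$ or $T_2 \prec T_3$) is exactly the standard justification, and your case split and boundary handling are sound. The paper itself offers no proof---it simply declares the fact's soundness ``obvious''---so your writeup supplies precisely the elementary argument the authors had in mind.
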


The soundness of Fact \ref{fact:prefix} is obvious. Let $\min(\mathcal{T})$ and $\max(\mathcal{T})$ be the minimum and maximum string in alphabetical order of $\mathcal{T}$, respectively. Then,
\begin{corollary}
\label{corollary:prefix}
For any string $Q$ s.t. $\min(\mathcal{T}) \preceq Q \prec \max(\mathcal{T})$, let $T_l=\arg\max_{T \in \mathcal{T}} T \preceq Q$ and $T_u=\arg\min_{T \in \mathcal{T}} Q \prec T$ be the lower bound and upper bound of $Q$, respectively. If $T^* = \arg\max_{T \in \mathcal{T}} |LCP(T, Q)|$, then $T^*=T_l$ or $T^*=T_u$. 
\end{corollary}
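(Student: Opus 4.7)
The plan is to argue by contraposition: take any $T \in \mathcal{T}$ different from both $T_l$ and $T_u$, and show that either $|LCP(T_l,Q)| \geq |LCP(T,Q)|$ or $|LCP(T_u,Q)| \geq |LCP(T,Q)|$. This immediately forces the argmax to lie in $\{T_l,T_u\}$. The key observation that makes Fact~\ref{fact:prefix} do all the work is that one may freely put $Q$ itself into the triple $(T_1,T_2,T_3)$: since $|LCP(Q,Q)|=m$, the $\min$ on the right-hand side of Fact~\ref{fact:prefix} collapses to the \emph{other} LCP, giving a clean one-sided bound.

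Concretely, I would first note that by the definitions of $T_l$ and $T_u$ no element of $\mathcal{T}$ lies strictly between them in alphabetical order, so every $T \in \mathcal{T}\setminus\{T_l,T_u\}$ satisfies either $T \prec T_l$ or $T \succ T_u$. In the first case, I apply Fact~\ref{fact:prefix} to the triple $(T,T_l,Q)$, which is in the required order $T\preceq T_l\prec Q$ provided $T_l\neq Q$, obtaining
\begin{equation*}
|LCP(T_l,Q)| \;\geq\; \min\bigl(|LCP(T,Q)|,\;|LCP(Q,Q)|\bigr) \;=\; |LCP(T,Q)|.
\end{equation*}
In the second case, I apply Fact~\ref{fact:prefix} to $(Q,T_u,T)$, which is in order $Q\preceq T_u\prec T$ by the definition of $T_u$ (note $T_u\succ Q$ strictly), obtaining $|LCP(T_u,Q)| \geq |LCP(T,Q)|$ by the same collapse of the $\min$.

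The only boundary case is $T_l = Q$, which can occur when $Q\in\mathcal{T}$; there the claim is trivial because $|LCP(T_l,Q)|=m$ dominates every other LCP. I would dispose of this case in one line at the start. Combining the two cases then gives $\max(|LCP(T_l,Q)|,|LCP(T_u,Q)|) \geq |LCP(T,Q)|$ for every $T\in\mathcal{T}$, so any maximizer $T^*$ must coincide with $T_l$ or $T_u$ (or, if the argmax is viewed as a set, at least one of $T_l,T_u$ belongs to it).

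The main obstacle, such as it is, is purely bookkeeping around strict versus non-strict order relations: Fact~\ref{fact:prefix} requires its middle and right endpoints to satisfy $T_2\prec T_3$, so one must verify strictness when invoking it with $Q$ on one end. Once the $T_l=Q$ case is peeled off and the strictness $T_u\succ Q$ is read off from the $\arg\min$ defining $T_u$, the rest is a direct two-line application of the fact.
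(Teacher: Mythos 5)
Your proof is correct and follows the route the paper intends: the corollary is stated as an immediate consequence of Fact~\ref{fact:prefix}, and your instantiation of that fact with $Q$ serving both as an endpoint of the ordered triple and as the query string (so that $|LCP(Q,Q)|=m$ collapses the $\min$) is exactly the standard way to make that implication explicit. The handling of the boundary case $T_l = Q$ and the strictness bookkeeping are fine.
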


Corollary \ref{corollary:prefix} indicates that, given a query string $Q$ and a database of $n$ sorted strings $\mathcal{T}$ in alphabetical order, one can use binary search on $\mathcal{T}$ to find $T^*$ in $O(m+\log n)$ time. It yields a simple method with two phases to answer the $1$-LCCS query as follows:
In the indexing phase, given a database $\mathcal{T}$ of $n$ strings of length $m$, we sort $shift(\mathcal{T},i-1)$ in alphabetical order and maintain the sorted index $I_i$ for each $i \in \{1,2,\cdots,m\}$. In the query phase, to find the $1$-LCCS of $Q$, we conduct binary search on each sorted index $I_i$ to get $T_i^*$ such that $T_i^* = \arg\max_{T \in shift(\mathcal{T},i-1)} |LCP(T, shift(Q,i-1))|$; the $1$-LCCS of $Q$ is the string $T^*$ among $\{T_1^*,T_2^*,\cdots,T_m^*\}$ with the largest $|LCP(T_i^*,shift(Q,i-1))|$.

This simple method requires $m$ times binary search, and hence the query time complexity is $O(m (m+\log n))$. Next, we introduce a strategy to reduce the query time complexity to $O(m+\log n)$ under certain assumptions. 

\begin{algorithm*}[t]
\caption{$k$-LCCS Search using CSA}
\label{alg:query}
\KwIn{ $\mathcal{T} = \{T_1,T_2,\cdots,T_n\}$, $\{I_1,I_2,\cdots,I_m\}$, $\{N_1,N_2,\cdots,N_m\}$, $Q$, and \#candidates $k$;}
\KwOut{$\mathcal{C}$: the results of $k$-LCCS search.}
$\mathcal{C} \gets \emptyset$; $PQ \gets \emptyset$ \Comment*[r]{$\mathcal{C}$ is a candidate set and $PQ$ is a priority queue}
(${pos}_{l,1}, {pos}_{u,1}, {len}_{l,1}, {len}_{u,1}$) $\gets$ BinarySearch($I_1,Q$) \Comment*[r]{${pos}_{l,i}$ and ${pos}_{u,i}$ are the positions of $T_{l,i}$ and $T_{u,i}$ in $I_i$}
$PQ.push({len}_{l,1}, {pos}_{l,1}, 1, -1)$ \Comment*[r]{$-1$ represents the down direction}
$PQ.push({len}_{u,1}, {pos}_{u,1}, 1, +1)$ \Comment*[r]{$+1$ represents the up direction} 
\For {$i = 2$ to $m$} {
	\If{${len}_{l,i-1} \geq 1$ and ${len}_{u,i-1} \geq 1$} {
		(${pos}_{l,i}, {pos}_{u,i}, {len}_{l,i}, {len}_{u,i}$) $\gets$ BinarySearchBetween($I_i, shift(Q,i-1), N_{i-1}[I_{i-1}[{pos}_{l,i-1}]], N_{i-1}[I_{i-1}[{pos}_{u,i-1}]]$)\;
	} 
	\Else {
		(${pos}_{l,i}, {pos}_{u,i}, {len}_{l,i}, {len}_{u,i}$) $\gets$ BinarySearch($I_i, shift(Q,i-1)$)\;
	}
	$PQ.push({len}_{l,i}, {pos}_{l,i}, i, -1)$\; 
	$PQ.push({len}_{u,i}, {pos}_{u,i}, i, +1)$\; 
}
\While{$|\mathcal{C}| < k$} {
	$(len, pos, i, dir) \gets PQ.top()$; $PQ.pop()$\;
	$\mathcal{C} \gets \mathcal{C} \cup \{I_i[pos]\}$\;
	$PQ.push(|LCP(shift(T_{I_i[pos]},i-1), shift(Q,i-1))|, pos+dir, i, dir)$\;
}
\Return $\mathcal{C}$\;
\end{algorithm*}

\begin{lemma}
\label{lemma:tail}
Suppose $T_l \preceq Q \prec T_u$. For any $k \geq 1$, if we have $|LCP(T_l,Q)|\geq k$ and $|LCP(T_u,Q)| \geq k$, then $shift(T_l, k) \preceq shift(Q, k) \prec shift(T_u, k)$.
\end{lemma}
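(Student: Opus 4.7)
The plan is to reduce everything to a comparison of the tails of the three strings. First I would unpack the hypotheses: the conditions $|LCP(T_l,Q)| \geq k$ and $|LCP(T_u,Q)| \geq k$ say exactly that the first $k$ characters of $T_l$, $Q$, and $T_u$ coincide character-by-character. Write $T_l = [\alpha_1 \cdots \alpha_k, \beta_{k+1} \cdots \beta_m]$, $Q = [\alpha_1 \cdots \alpha_k, \gamma_{k+1} \cdots \gamma_m]$, and $T_u = [\alpha_1 \cdots \alpha_k, \delta_{k+1} \cdots \delta_m]$, where the shared prefix of length $k$ is $[\alpha_1, \ldots, \alpha_k]$.

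Next I would observe that because the three strings share their first $k$ characters, the alphabetical comparisons in the hypothesis $T_l \preceq Q \prec T_u$ are entirely decided by positions $k{+}1, \ldots, m$. Hence $[\beta_{k+1} \cdots \beta_m] \preceq [\gamma_{k+1} \cdots \gamma_m] \prec [\delta_{k+1} \cdots \delta_m]$. Note that $Q \prec T_u$ is strict, so one cannot have $[\gamma_{k+1} \cdots \gamma_m] = [\delta_{k+1} \cdots \delta_m]$; the $T_l$ side is allowed to be $\preceq$ because $T_l$ may equal $Q$.

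Then I would unfold the definition of $shift$. By definition, $shift(T, k)$ places the tail $[T_{k+1}, \ldots, T_m]$ as the first $m{-}k$ characters and the original prefix $[T_1, \ldots, T_k]$ as the last $k$ characters. So $shift(T_l, k) = [\beta_{k+1} \cdots \beta_m, \alpha_1 \cdots \alpha_k]$, and similarly for $shift(Q,k)$ and $shift(T_u,k)$. Since these three shifted strings agree on their last $k$ characters (all equal to $[\alpha_1, \ldots, \alpha_k]$), their alphabetical comparison is decided by the first $m{-}k$ characters, which are exactly the tails compared in the previous paragraph. Therefore $shift(T_l, k) \preceq shift(Q, k) \prec shift(T_u, k)$, as claimed.

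There is no real obstacle here; the argument is essentially bookkeeping once one recognizes that shifting by $k$ turns the tail that governs the lexicographic comparison into the prefix. The only point worth flagging is the edge case where $|LCP(T_l, Q)| = m$, i.e.\ $T_l = Q$ (allowed by the non-strict side); this just yields $shift(T_l, k) = shift(Q, k)$, which is consistent with the $\preceq$ in the conclusion. The analogous case on the $T_u$ side is ruled out by the strict inequality $Q \prec T_u$.
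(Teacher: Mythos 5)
Your proof is correct and is essentially the argument the paper has in mind: the paper simply asserts that the lemma ``is true according to the definition of $\prec$ and $\preceq$,'' and your write-up supplies exactly the bookkeeping behind that assertion (shared length-$k$ prefix, comparison decided by the tails, and the shift moving those tails to the front while appending an identical suffix). Your handling of the edge cases --- $T_l = Q$ being consistent with $\preceq$, and $k=m$ being excluded by the strictness of $Q \prec T_u$ --- is a welcome addition that the paper omits entirely.
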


Lemma \ref{lemma:tail} is true according to the definition of $\prec$ and $\preceq$. According to Lemma \ref{lemma:tail}, once we conduct binary search on $shift(\mathcal{T},i-1)$ for a query string $shift(Q,i-1)$ and find $T_{l,i}$ and $T_{u,i}$ as its lower bound and upper bound, respectively, we can immediately know a loose lower bound and a loose upper bound for $shift(Q, i)$. Let ${len}_{l,i}=|LCP(T_{l,i},shift(Q,i-1))|$ and ${len}_{u,i}=|LCP(T_{u,i}$, $shift(Q,i-1))|$. Then,
\begin{corollary}
\label{corollary:tail-cover}
If ${len}_{l,i} \geq 1$ and ${len}_{u,i} \geq 1$, then the lower bound $T_{l,i+1}$ and upper bound $T_{u,i+1}$ of $shift(Q,i)$ satisfy that $shift(T_{l,i}, 1) \preceq T_{l,i+1} \preceq shift(Q, i) \prec T_{u,i+1} \preceq shift(T_{u,i}, 1)$. 
\end{corollary}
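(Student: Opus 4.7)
The plan is to derive the corollary as a direct consequence of Lemma \ref{lemma:tail} applied with $k=1$, followed by using the definitions of the lower and upper bounds in the shifted sorted array. I would first set up the hypotheses: by definition, $T_{l,i}$ and $T_{u,i}$ are the lower and upper bounds of $shift(Q,i-1)$ in the sorted collection $shift(\mathcal{T}, i-1)$, so $T_{l,i} \preceq shift(Q, i-1) \prec T_{u,i}$, and by assumption ${len}_{l,i} \geq 1$ and ${len}_{u,i} \geq 1$. These are exactly the hypotheses required to invoke Lemma \ref{lemma:tail} with $k=1$ on the triple $(T_{l,i}, shift(Q,i-1), T_{u,i})$.

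Next I would apply Lemma \ref{lemma:tail} to obtain
\begin{equation*}
shift(T_{l,i}, 1) \preceq shift(shift(Q,i-1), 1) \prec shift(T_{u,i}, 1).
\end{equation*}
Using the elementary identity $shift(shift(S, a), b) = shift(S, a+b)$ (which I would state as an easy consequence of Definition of $shift$), the middle term collapses to $shift(Q, i)$, giving the chain
\begin{equation*}
shift(T_{l,i}, 1) \preceq shift(Q, i) \prec shift(T_{u,i}, 1).
\end{equation*}

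Then I would observe that $shift(T_{l,i}, 1)$ and $shift(T_{u,i}, 1)$ both lie in $shift(\mathcal{T}, i)$: since $T_{l,i} \in shift(\mathcal{T}, i-1)$, it has the form $shift(T, i-1)$ for some $T \in \mathcal{T}$, and therefore $shift(T_{l,i}, 1) = shift(T, i) \in shift(\mathcal{T}, i)$, and symmetrically for $T_{u,i}$. Combined with the display above, $shift(T_{l,i}, 1)$ is a valid candidate for ``the largest element of $shift(\mathcal{T}, i)$ that is $\preceq shift(Q, i)$'', so the lower bound $T_{l,i+1}$ of $shift(Q, i)$ must satisfy $T_{l,i+1} \succeq shift(T_{l,i}, 1)$; analogously $T_{u,i+1} \preceq shift(T_{u,i}, 1)$. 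Assembling these with the defining inequalities $T_{l,i+1} \preceq shift(Q, i) \prec T_{u,i+1}$ yields the full chain claimed in the corollary.

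There is essentially no hard step here; the whole argument is bookkeeping. The one point I would take care to spell out cleanly is the shift-composition identity and the observation that $shift(T_{l,i},1)$ and $shift(T_{u,i},1)$ actually belong to $shift(\mathcal{T}, i)$, because this is what licenses treating them as admissible candidates when tightening to the true bounds $T_{l,i+1}$ and $T_{u,i+1}$ in the next sorted index $I_{i+1}$.
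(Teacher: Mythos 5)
Your proposal is correct and follows exactly the route the paper intends: the paper states Corollary \ref{corollary:tail-cover} as an immediate consequence of Lemma \ref{lemma:tail} (applied with $k=1$) without writing out a proof, and your argument is precisely the natural fleshing-out of that step, including the shift-composition identity and the observation that $shift(T_{l,i},1),\,shift(T_{u,i},1)\in shift(\mathcal{T},i)$ so that the true bounds $T_{l,i+1}$ and $T_{u,i+1}$ must lie between them and $shift(Q,i)$.
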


Based on Lemma \ref{lemma:tail} and Corollary \ref{corollary:tail-cover}, the simple method discussed before can be further optimized. 
To find the $1$-LCCS of $Q$, we conduct only once binary search on the whole $shift(\mathcal{T},0)$ ($i=1$) for the query $shift(Q,0)$ (or simply $Q$). 
Then, for the query $shift(Q,i-1)$ ($i>1$), according to Corollary \ref{corollary:tail-cover}, we can conduct binary search on $shift(\mathcal{T},i-1)$ between $shift(T_{l,i-1},1)$ and $shift(T_{u,i-1},1)$. After that, we get $T_{l,i}$ and $T_{u,i}$ and can continue to use them to narrow down the binary search range for the next query $shift(Q,i)$. We repeat this procedure, and the $1$-LCCS of $Q$ will be the string with the longest LCP among $LCP(T_{l,i}, shift(Q,i-1))$ and $LCP(T_{u,i}, shift(Q,i-1))$ for all $i \in \{1,2,\cdots,m\}$. 

To speed up the query phase, we need to know the positions of $shift(T_{l,i},1)$ and $shift(T_{u,i},1)$ when we get $T_{l,i}$ and $T_{u,i}$. Thus, in the indexing phase, we not only need to maintain the sorted indices $\{I_1,I_2,\cdots,I_m\}$, but also require to store the next links $\{N_1,N_2,\cdots,N_m\}$, e.g., $N_i$ stores the positions of $\mathcal{T}$ in the next sorted $shift(\mathcal{T},i\%m)$. The pseudo-code of building CSA is depicted in Algorithm \ref{alg:indexing}.

To find the $k$-LCCS of $Q$, we first follow the procedure of $1$-LCCS search and compute $T_{l,i}$ and $T_{u,i}$ for each $shift(Q,i-1)$. Then, we construct a priority queue $PQ$ and perform a $2m$-way sorted list merge. The strings with top-$k$ longest lengths in $PQ$ are the $k$-LCCS results of $Q$. The pseudo-code of $k$-LCCS search is shown in Algorithm \ref{alg:query}. 

\begin{figure}[tb]
\centering   
\includegraphics[width=0.45\textwidth]{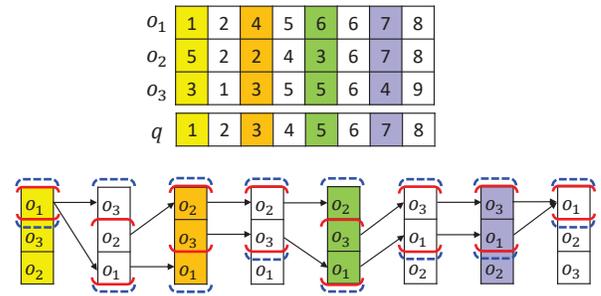}
\vspace{-0.5em}
\caption{An example of the $1$-LCCS search}   
\vspace{-1.0em}
\label{fig:lsh-example}
\end{figure}

\begin{example}
\label{exmp:lccs-lsh}
We now use an example to illustrate Algorithm \ref{alg:query}. Suppose $k=1$. We continue to use the same $o_1,o_2,o_3$ and $q$ from Figure \ref{fig:example} as in Figure \ref{fig:lsh-example}. We follow Algorithm \ref{alg:indexing} and build CSA with sorted indices $\{I_1,I_2,\cdots,I_8\}$ and next links $\{N_1,N_2,\cdots$, $N_8\}$, e.g., $I_1 = [1,3,2]$ and $N_1=[3,1,2]$. 

Given a query string $q=[1,2,3,4,5,6,7,8]$, to find the $1$-LCCS of $q$, we first conduct binary search on the whole $I_1$, and get the positions of $T_{l,1}$ and $T_{u,1}$, i.e., ${pos}_{l,1}$ and ${pos}_{u,1}$, as depicted by red brackets (line 2). Since $q \prec o_1$, ${pos}_{l,1}={pos}_{u,1}=1$. Then, along with $N_i$, the binary search range on the next $I_{i+1}$ can be determined, as shown by blue dash brackets (lines 5--9). For example, consider $I_5$, since $shift(o_3$, $4) \prec shift(q,4) \prec shift(o_1,4)$, ${pos}_{l,5}=2$ and ${pos}_{u,5}=3$. Since $N_5[{pos}_{l,5}]=1$ and $N_5[{pos}_{u,5}]=2$, the binary search range on $I_6$ is narrowed down to $[1,2]$. We use a priority queue to check the objects with longest LCP among $\{I_1,I_2$, $\cdots,I_8\}$ (lines 3--4, lines 8--9, and lines 12--15). $o_1$ is first verified because $o_1$ has the largest $|LCP(shift(o_1,5),shift(q,5))| = 5$ on $I_6$, and $o_1$ is the $1$-LCCS of $q$.
\hfill $\triangle$ \par 
\end{example}

\begin{theorem}
\label{theorem:klccs}
Let $T=[t_1,t_2,\cdots,t_m]$ and $Q=[q_1,q_2,\cdots$, $q_m]$. If the probability that $t_i=q_i$ equals to $p$ and is independent for each $i$, one can build a data structure CSA using Algorithm \ref{alg:indexing} with $O(nm)$ space and $O(mn\log n)$ time, and answer the $k$-LCCS queries using Algorithm \ref{alg:query} within $O(\log n + (m+k)\log m)$ time.
\end{theorem}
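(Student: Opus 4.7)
The plan is to establish the three bounds of the theorem (space, build time, query time) separately, using the per-position independence assumption as a probabilistic model under which the LCP of two ``random'' length-$m$ strings (as measured by coordinate agreement with probability $p$) is a geometric random variable with constant mean $O(1/(1-p))$.

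The space bound is immediate: $\{I_1,\ldots,I_m\}$ and $\{N_1,\ldots,N_m\}$ together store $2mn$ integers. For the build time, the second (next-link) loop of Algorithm~\ref{alg:indexing} is manifestly $O(mn)$, so the bottleneck is the first loop, which sorts $m$ circular-shifted collections of $n$ strings. I would extend the per-position independence assumption to pairs of database strings (per-position equality with probability at most $p<1$, independently across positions) and conclude that every pairwise string comparison during sorting costs $O(1/(1-p))=O(1)$ in expectation. A comparison sort of $n$ strings therefore runs in $O(n\log n)$ expected time, and summing over $m$ shifts yields $O(mn\log n)$.

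For the query I would decompose Algorithm~\ref{alg:query} into three contributions. First, the initial binary search on $I_1$ in line~2 performs $O(\log n)$ comparisons against $Q$, each costing $O(1)$ in expectation by the same geometric-LCP argument, for $O(\log n)$ expected work (the residual scan that computes the exact ${len}_{l,1},{len}_{u,1}$ after the search terminates is absorbed into this bound). Second, for each $i=2,\ldots,m$, line~7 performs a narrowed binary search in $I_i$ between the positions of $shift(T_{l,i-1},1)$ and $shift(T_{u,i-1},1)$, which are read off in $O(1)$ from $N_{i-1}$; by Corollary~\ref{corollary:tail-cover} these positions bracket $T_{l,i}$ and $T_{u,i}$, so the search is well-defined, and under the independence model the narrowed search can be shown to cost $O(1)$ in expectation per shift, totalling $O(m)$. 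Third, the priority queue $PQ$ is loaded with $2m$ entries in lines~3--4 and 8--9 and then undergoes $k$ pops with $k$ re-pushes in lines~12--15, each operation on a queue of size $O(m)$ costing $O(\log m)$, for $O((m+k)\log m)$ total queue work. Summing, the expected query time is $O(\log n)+O(m)+O((m+k)\log m)=O(\log n+(m+k)\log m)$.

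The main obstacle will be making the second contribution rigorous: bounding the expected cost of each narrowed binary search by $O(1)$. I plan to split into two regimes. When the LCP of $Q$ with its neighbors in $I_i$ is short, individual comparisons are cheap ($O(1)$ expected each), so even a bracket of moderate size admits a search whose cost across the $m$ shifts amortizes to $O(m)$. When the LCP is long, individual comparisons become expensive, but the expected number of database strings that agree with $Q$ on a prefix of length $\ell$ is $np^{\ell}$, which is $O(1)$ once $\ell\gtrsim\log_{1/p}n$, so the bracketed range is tiny. A careful joint argument combining the two regimes, together with the corner case in which ${len}_{l,i-1}$ or ${len}_{u,i-1}$ drops to $0$ and line~9 reverts to an unrestricted binary search, is the delicate part of the proof.
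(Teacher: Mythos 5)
Your proposal follows essentially the same route as the paper's proof: the space bound is read directly off the arrays, the build time is $m$ comparison sorts at $O(n\log n)$ each, and the query time is decomposed into the initial $O(\log n)$ binary search on $I_1$, expected-$O(1)$ narrowed searches per shift (the paper simply asserts that the bracketed range contains $O(1/p)$ strings in expectation, which is the conclusion your two-regime sketch aims at), and $O((m+k)\log m)$ of priority-queue work. You are somewhat more careful than the paper in accounting for the expected cost of individual string comparisons during sorting and searching, a point the paper's proof leaves implicit.
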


\begin{proof}
The space complexity of CSA is obvious. Since $\{I_1,I_2,\cdots,I_m\}$ and $\{N_1,N_2,\cdots,N_m\}$ require $O(nm)$ space, the space complexity of Algorithm \ref{alg:indexing} is also $O(nm)$. Algorithm \ref{alg:indexing} requires $m$ times quick sort and each takes $O(n \log n)$ time. Thus, the indexing time complexity is $O(mn\log n)$. 

For the $k$-LCCS search, Algorithm \ref{alg:query} first conducts binary search on $I_1$, which requires $O(\log n)$ time (line 2). At each iteration $i$ (lines 5-11), there are expected $O(1/p)$ objects between the lower bound $N_{i-1}[I_{i-1}[{pos}_{l,i-1}]]$ and upper bound $N_{i-1}[I_{i-1}[{pos}_{u,i-1}]]$; since $p$ is a constant value, each binary search takes $O(\log(\min(1/p, n)))=O(1)$ time only. To find the $k$-LCCS of $Q$, there are $O(m+k)$ priority queue operations on average (lines 12-15), and each takes at most $O(\log m)$ time. Thus, the time complexity of Algorithm \ref{alg:query} is $O(\log n + (m+k)\log m)$.
\end{proof}

\section{The LCCS-LSH Scheme}
\label{sect:methods}
In this section, we present the LCCS-LSH schemes for high-dimensional $c$-ANNS. Section \ref{sect:methods:single} introduces the single-probe version of LCCS-LSH. We design a heuristic multi-probe version of LCCS-LSH in Section \ref{sect:methods:multi}. 
\vspace{-0.5em}

\subsection{Single-Probe LCCS-LSH}
\label{sect:methods:single}
The single-probe LCCS-LSH scheme (or simply LCCS-LSH) consists of two phases: indexing phase and query phase. 

\paragraph{Indexing Phase}
Given a database $\mathcal{D}$ of $n$ data objects, LCCS-LSH first generates $m$ i.i.d. LSH functions $h_1,h_2,\cdots,h_m$ from the LSH family $\mathcal{H}$. Then, it computes the $m$ hash values $h_1(o),h_2(o),\cdots,h_m(o)$ for each $o \in \mathcal{D}$ and concatenates all of them to a hash string $H(o) = [h_1(o),h_2(o)$, $\cdots,h_m(o)]$ of length $m$. Let $\mathcal{T}=\{H(o) \mid o \in \mathcal{D}\}$ be a collection of $n$ such hash strings. Finally, LCCS-LSH constructs a data structure CSA for $\mathcal{T}$ using Algorithm \ref{alg:indexing}.

\paragraph{Query Phase}
For the $c$-ANNS of $q$, LCCS-LSH first computes the hash string $H(q)$. Then, it conducts a $\lambda$-LCCS search of $H(q)$ using Algorithm \ref{alg:query}, and gets a set $\mathcal{C}$ of candidates such that $|\mathcal{C}|=\lambda$. Finally, we compute the actual distance between each candidate $o \in \mathcal{C}$ and $q$, and return the nearest one as the $c$-ANNS answer of $q$. For the $c$-$k$-ANNS of $q$, LCCS-LSH only needs to conduct $(\lambda+k-1)$-LCCS search of $H(q)$ and verifies $(\lambda+k-1)$ candidates from $\mathcal{C}$ accordingly. The nearest $k$ objects among $\mathcal{C}$ are the $c$-$k$-ANNS answers of $q$. $\lambda$  is a parameter which is determined by $m$ and $n$. We will discuss the settings of $m$ and $\lambda$ in Section \ref{sect:analysis}.

Notably, the $LCCS(H(o),H(q))$ between $H(o)$ and $H(q)$ can be considered as a dynamic concatenation of $l$ consecutive hash values, i.e., $h_i(o),h_{i+1}(o),\cdots,h_{(i+l)\% m}(o)$, where $l = |LCCS(H(o),H(q))|$. Thus, the LCCS search framework can be considered as a \emph{dynamic concatenating search framework}. Similar to the static concatenating search framework, the false positives can be effectively avoided due to the concatenation. Furthermore, since Algorithm \ref{alg:query} prioritizes the objects with the largest $|LCCS(H(o),H(q))|$ as candidates, it can also identify the correct answers efficiently.

\subsection{Multi-Probe LCCS-LSH}
\label{sect:methods:multi}
The multi-probe schemes are widely used to reduce space overhead, such as Multi-Probe LSH \cite{lv2007multi} for random projection LSH family \cite{datar2004locality} and FALCONN \cite{andoni2015practical} for cross-polytope LSH family \cite{terasawa2007spherical}. However,
they are designed for the static concatenating search framework. It is inefficient to trivially adapt existing multi-probe schemes to LCCS-LSH.

\paragraph{Challenges}
To explain the reason why existing multi-probe schemes do not work well with LCCS-LSH, we first consider a trivial multi-probe extension: 
given a hash string $H(q)=[h_1(q),h_2(q),\cdots,h_m(q)]$, we adopt existing multi-probe schemes to (virtually) generate a sequence of probes by modifying some of $h_i(q)$ among $H(q)$; then, we conduct a $\lambda$-LCCS search of this modified $H(q)$ in the probing sequence using Algorithm \ref{alg:query}. 
This trivial multi-probe extension, however, has two major problems. 
Firstly, if we modify a single $h_i$ only, since Algorithm \ref{alg:query} uses LCP to find the objects with largest $|LCCS(H(o),H(q))|$, the LCP from most of the positions after $i$, i.e., $i+1,i+2,\cdots$, are identical to those before modification, which should be avoided. 
Secondly, for the $\lambda$-LCCS search of $H(q)$ after two modifications which are far away from each other, it is very likely that the new probed objects were checked in previous probing sequence, leading to redundant computations.

\vspace{-0.5em}
\begin{example}
We now use Figure \ref{fig:mp-lccs-example} to illustrate these two problems. Suppose $H(q)=[1,2,3,4,5,6,7,8]$ and $H^{(1)}$, $H^{(2)}$, and $H^{(3)}$ are three alternative probes by modifying $h_1(q)$ and $h_4(q)$ to $5$. Let $\mathcal{T}=\{H(o_1), H(o_2), H(o_3)\}$. 
Firstly, by modifying $H(q)$ to $H^{(1)}$, except for $shift(\mathcal{T}, 2)$ and $shift(\mathcal{T}, 3)$, the objects that have longest LCP of $H^{(1)}$ from other $shift(\mathcal{T}, i)$ do not change, i.e., $i \in \{4,5,6,7,0,1\}$. Thus, they should be avoided for the $\lambda$-LCCS search of $H^{(1)}$. Similarly, for $H^{(2)}$, we only need to consider $shift(\mathcal{T}, 5)$, $shift(\mathcal{T}, 6)$, $shift(\mathcal{T}, 7)$, and $shift(\mathcal{T}, 0)$.
Secondly, considering $H^{(3)}$, which is a combination of $H^{(1)}$ and $H^{(2)}$ and these two modifications are far enough. We can see the new candidates introduced by $H^{(3)}$ are either the objects from $shift(\mathcal{T}, 2)$ and $shift(\mathcal{T}, 3)$ by the $\lambda$-LCCS search of $H^{(1)}$ or those from $shift(\mathcal{T}, 5)$, $shift(\mathcal{T}$, $6)$, $shift(\mathcal{T}, 7)$, and $shift(\mathcal{T}, 0)$ by the $\lambda$-LCCS search of $H^{(2)}$. Since $H^{(1)}$ and $H^{(2)}$ have fewer modifications than $H^{(3)}$, they have higher priority than $H^{(3)}$. The new candidates introduced by $H^{(3)}$ were checked already. It is redundant to probe $H^{(3)}$.
\hfill $\triangle$ \par 
\end{example}

\begin{figure}[t]
\centering   
\includegraphics[width=0.47\textwidth]{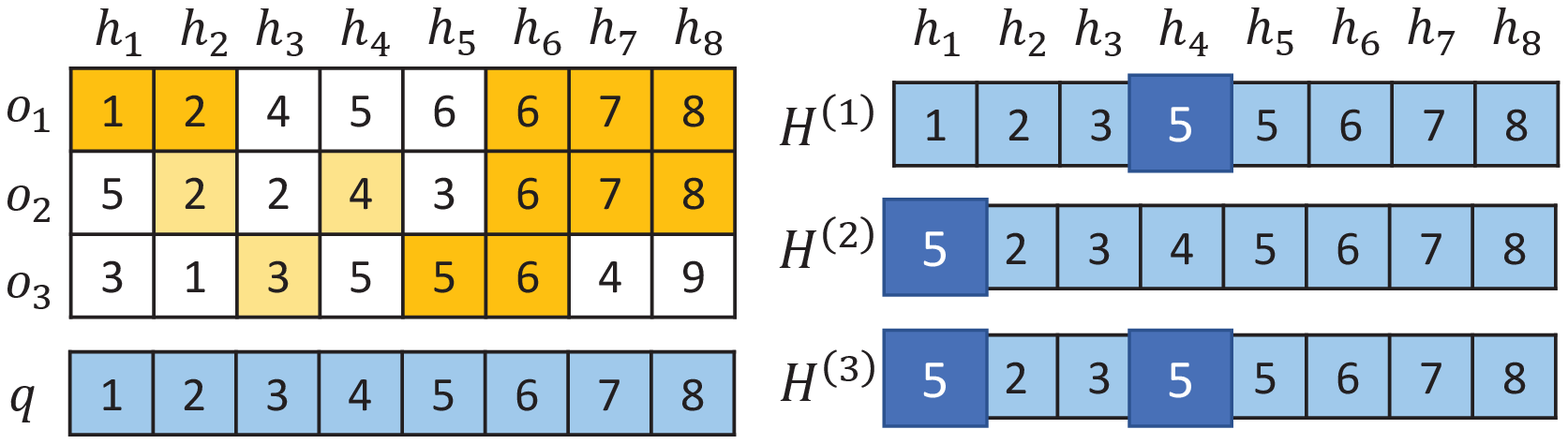}
\vspace{-1.0em}
\caption{An example of MP-LCCS-LSH}
\label{fig:mp-lccs-example}
\vspace{-0.5em}
\end{figure}

\vspace{-0.5em}
\paragraph{MP-LCCS-LSH}
To address these two problems, we design a multi-probe scheme for LCCS-LSH, named MP-LCCS-LSH, making use of existing multi-probe schemes, such as Multi-Probe LSH and FALCONN. Given a hash string $H(q)=[h_1(q)$, $h_2(q),\cdots,$ $h_m(q)]$, we can get $m$ lists of alternative hash values, i.e., $\{{h_1(q)}^{(j)}\},\{{h_2(q)}^{(j)}\},\cdots,\{{h_m(q)}^{(j)}\}$, where each $\{{h_i(q)}^{(j)}\}$ is a list of alternative hash values of $h_i(q)$. For example, for Multi-Probe LSH, $\{{h_i(q)}^{(j)}\} = \{h_i(q) \pm 1, h_i(q) \pm 2, \cdots\}$, whereas for FALCONN, $\{{h_i(q)}^{(j)}\}$ is a list of other vertices of the cross-polytope. Let $score(i, {h_i(q)}^{(j)})$ be the score of the $j^{th}$ alternative ${h_i(q)}^{(j)}$ in $i^{th}$ position, and we reuse the score function from existing multi-probe schemes. Without loss of generality, we consider each $\{{h_i(q)}^{(j)}\}$ is sorted in ascending order of their scores. A perturbation vector $\delta$ is a list of pairs $(i, {h_i(q)}^{(j)})$, where $i$ is the position of modification and ${h_i(q)}^{(j)}$ is used to replace $h_i(q)$, e.g., $\delta=\{(2, {h_2(q)}^{(1)})$, $(5, {h_5(q)}^{(3)})\}$ means to modify $h_2(q)$ to ${h_2(q)}^{(1)}$ and $h_5(q)$ to ${h_5(q)}^{(3)}$. We inherent $score(\delta)$ to be the score of $\delta$ from existing multi-probe schemes.

\subparagraph{Skip Unaffected Positions}
For the first problem, we skip the unaffected positions. 
During the first $\lambda$-LCCS search of $H(q)$, we additionally store the matched positions ${pos}_{l,i}, {pos}_{u,i}$ and the lengths ${len}_{l,i}$, ${len}_{u,i}$ for each $i$ at lines 2, 7, and 9 of Algorithm \ref{alg:query}. If the modification of $H(q)$ is not in the positions between $i$ and $i+\max({len}_{l,i}, {len}_{u,i})$, it will not affect the LCP of $shift(H(q), i-1)$ at position $i$. Thus, instead of conducting a full $\lambda$-LCCS search from position $1$ to $m$, the probe of $H(q)$ with $\delta=\{(i_1, h_{i_1}(q)^{(j_1)}),(i_2, h_{i_2}(q)^{(j_2)}),\cdots\}$ can be checked by the LCP of $shift(H(q), i)$ starting from the first position $i_s$ such that $i_s+\max({len}_{l,i_s}, {len}_{u,i_s}) > i_1$ to $i_1$, i.e., with a modification of Algorithm \ref{alg:query} at lines 2 and 5.
 
\begin{algorithm}[t]
\caption{Generating Perturbation Vectors}
\label{alg:multi-probe-lccs}
\KwIn{$\#probes$, $MAX\_GAP$;}
\KwOut{$\Delta$: a set of perturbation vectors.}
$\Delta \gets \{\emptyset\}$ \Comment*[r]{add "no perturbation" into $\Delta$}
$PQ \gets \emptyset$\Comment*[r]{a minimum priority queue}
\For {$i = 1$ to $m$} {
	$\delta = \{(i, {h_i(q)}^{(1)})\}$\;
	$PQ.push(\delta,score(\delta))$\;
}
\For {$t = 2$ to $\#probes$} {
	$s,\delta = PQ.top()$;  $PQ.pop()$\; 
	$\Delta \gets \Delta \cup \{\delta\}$\;
	$\delta_s = p\_shift(\delta)$\;
	$PQ.push(\delta_s, score(\delta_s))$\;
	\For {$gap = 1$ to $MAX\_GAP$} {
		$\delta_e = p\_expand(\delta, gap)$\;
		$PQ.push(\delta_e, score(\delta_e))$\;
	}
}
\Return $\Delta$\;
\end{algorithm}

\subparagraph{Perturbation Vector Generation}
For the second problem, we restrict the gap between two adjacent modified positions in a perturbation vector. For example, given a perturbation vector $\delta=\{(1,{h_1(q)}^{(3)}), (2,{h_2(q)}^{(1)}), (5,{h_5(q)}^{(2)})\}$, the gaps of $\delta$ at the $1^{st}$ and $2^{nd}$ positions are respectively $2-1=1$ and $5-2=3$, and they should be less than or equal to a threshold $MAX\_GAP$. We set $MAX\_GAP=2$ in practice.

Based on this heuristic idea, we propose a perturbation vector generation method in Algorithm \ref{alg:multi-probe-lccs}, within the similar shift-expand operations in \cite{lv2007multi}. We name them $p\_shift$ and $p\_expand$ to distinguish from the shift operation of CSA. Let $\delta=\{(i_1,h_{i_1}(q)^{(j_1)}), (i_2,h_{i_2}(q)^{(j_2)}), \cdots, (i_e,h_{i_e}(q)^{(j_e)})\}$. They are defined as follows:
\begin{itemize}
\item $p\_shift(\delta)$: use the next alternative hash value of the last modification operation of $\delta$, i.e., $p\_shift(\delta) = \{(i_1$, $h_{i_1}(q)^{(j_1)}),(i_2, h_{i_2}(q)^{(j_2)}),\cdots,(i_e, h_{i_e}(q)^{(j_e+1)})\}$;
\vspace{0.5em}
\item $p\_expand(\delta, gap)$: append $(i_e+gap, h_{i_e+gap}(q)^{(1)})$ to $\delta$, i.e., $p\_expand(\delta, gap)=\{(i_1, h_{i_1}(q)^{(j_1)}),(i_2, h_{i_2}(q)^{(j_2)}),$ $\cdots,(i_e, h_{i_e}(q)^{(j_e)}),(i_e+gap, h_{i_e+gap}(q)^{(1)})\}$.
\end{itemize}

\subparagraph{Remarks}
Even though the formulas of $p\_shift$ and $p\_expand$ are very similar to \cite{lv2007multi}, the meaning of the perturbation vector is different. Following the similar proofs from \cite{lv2007multi}, it can be shown that all perturbation vectors with gap less than $MAX\_GAP$ can be generated by Algorithm \ref{alg:multi-probe-lccs}, and they will be probed in ascending order of their scores.

\section{Theoretical Analysis}
\label{sect:analysis}

\begin{table*}[t]
\centering
\caption{Space and time complexities of E2LSH, C2LSH, and LCCS-LSH under different settings of $\alpha$}
\vspace{-0.75em}
\label{table:time_complexity}
\begin{tabular}{ccccccc} \toprule
Methods & $\alpha$ & $m$ & $\lambda$ & Space Complexity & Indexing Time Complexity & Query Time Complexity \\ \midrule
E2LSH \cite{datar2004locality} & -- & -- & -- & $O(n^{1+\rho})$ & $O(n^{1+\rho} \eta(d) \log n)$ & $O(n^\rho (\eta(d) \log n + d))$ \\
C2LSH \cite{gan2012locality} & -- & -- & -- & $O(n\log n)$ & $O(n\log n (\eta(d) + \log n))$ & $O(n\log n)$ \\
\multirow{3}{*}{LCCS-LSH}
& 0  & $O(1)$ & $O(n)$ & $O(n)$ & $O(n(\eta(d)+\log n))$ & $O(nd)$ \\ 
& 1  & $O(n^\rho)$ & $O(n^\rho)$ & $O(n^{1+\rho})$ & $O(n^{1+\rho} (\eta(d)+\log n))$ & $O(n^\rho (\eta(d)+d+\log n))$ \\
& $\tfrac{1}{1-\rho}$ & $O(n^{\frac{\rho}{1-\rho}})$ & $O(1)$ & $O(n^{\frac{1}{1-\rho}})$ & $O(n^{\frac{1}{1-\rho}} (\eta(d)+\log n))$ & $O(n^{\frac{\rho}{1-\rho}} (\eta(d)+\log n)+d)$ \\ \bottomrule
\end{tabular}
\end{table*}

\subsection{Quality Guarantee}
\label{sect:analysis:guarantee}
We now establish a quality guarantee for LCCS-LSH. Given any two strings $T=[t_1,t_2,\cdots,t_m]$ and $Q=[q_1,q_2,\cdots,q_m]$, suppose the probability that $t_i=q_i$ for each $i$ is independent and equals to $p$, i.e., $\Pr[t_i = q_i] = p$. Let $F_{m, p}(x)=\Pr[|LCCS(T,Q)| \leq x]$ be the CDF of the length of LCCS between $T$ and $Q$. Notably, $F_{m, p}(x)$ decreases monotonically as $p$ increases when $m$ and $x$ are fixed.

Let $B(q,R)$ and $\bar{B}(q,R)$ be the set of $\{o \in \mathcal{D} \mid Dist(o,q) \leq R\}$ and $\{o \in \mathcal{D} \mid Dist(o,q) > R\}$, respectively. Since $F_{m, p}(x)$ is monotonic w.r.t. $p$, we have Lemma \ref{lemma:amplification} as follows.
\begin{lemma}
\label{lemma:amplification}
Given a parameter $x$ such that $0 < x \leq m$, for any $o_i \in \bar{B}(q,cR)$, 
\begin{displaymath}
\Pr[|LCCS(H(o_i), H(q))| \leq x] \geq F_{m, p_2}(x), 
\end{displaymath}
and for any $o^*_j\in B(q, R)$, 
\begin{displaymath}
\Pr[|LCCS(H(o^*_j), H(q))| > x] \geq 1 - F_{m, p_1}(x).
\end{displaymath}
\end{lemma}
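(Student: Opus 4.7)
The plan is to reduce the statement to the monotonicity of $F_{m,p}(x)$ in $p$ (stated just above the lemma) together with the defining bounds of the LSH family. First I would fix $o \in \mathcal{D}$ and $q$ and note that, since $h_1,\dots,h_m$ are i.i.d.\ draws from $\mathcal{H}$, the per-position collision indicators $X_i := \mathbf{1}[h_i(o) = h_i(q)]$ are i.i.d.\ Bernoulli with common success probability $p^\star := \Pr_{h \sim \mathcal{H}}[h(o) = h(q)]$. By Definition \ref{def:lsh_family}, $p^\star \geq p_1$ when $o \in B(q,R)$ and $p^\star \leq p_2$ when $o \in \bar{B}(q,cR)$. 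Moreover, $|LCCS(H(o),H(q))|$ is a deterministic function of the binary vector $(X_1,\dots,X_m)$: unwinding Definition \ref{def:ccs}, a nonempty circular co-substring occupies cyclic positions $i,i+1,\dots,j$ iff $X_i = X_{i+1} = \cdots = X_j = 1$, so $|LCCS(H(o),H(q))|$ is exactly the length of the longest run of consecutive $1$s in the cyclic sequence $(X_1,\dots,X_m)$. Therefore, by the definition of $F_{m,p}$, $\Pr[|LCCS(H(o),H(q))| \leq x] = F_{m,p^\star}(x)$.

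Next I would justify the monotonicity $F_{m,p}(x) \geq F_{m,p'}(x)$ whenever $p \leq p'$ by a coupling. Let $U_1,\dots,U_m$ be i.i.d.\ Uniform$[0,1]$ and set $X_i^{(p)} := \mathbf{1}[U_i \leq p]$; then $X_i^{(p)} \leq X_i^{(p')}$ pointwise for $p \leq p'$. The longest cyclic run of $1$s is a coordinate-wise non-decreasing function of its binary argument, since flipping any $0$ to a $1$ can only merge or extend existing runs and can never shorten one. Hence the longest-run length under parameter $p'$ almost-surely dominates the one under $p$, and taking probabilities of the event $\{\cdot \leq x\}$ reverses the inequality to yield $F_{m,p'}(x) \leq F_{m,p}(x)$.

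Combining the two ingredients then finishes the proof. For any $o_i \in \bar{B}(q,cR)$ we have $p^\star \leq p_2$, so $\Pr[|LCCS(H(o_i),H(q))| \leq x] = F_{m,p^\star}(x) \geq F_{m,p_2}(x)$. For any $o^*_j \in B(q,R)$ we have $p^\star \geq p_1$, so $\Pr[|LCCS(H(o^*_j),H(q))| > x] = 1 - F_{m,p^\star}(x) \geq 1 - F_{m,p_1}(x)$. The only nontrivial step is the coupling/monotonicity argument, and it rests on the easily verified but essential fact that the longest cyclic run of $1$s is coordinate-wise monotone in the indicator vector; once that is in place, everything else is an unwinding of the LSH definition and of $F_{m,p}$. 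I therefore expect the coordinate-wise monotonicity of the longest cyclic run to be the only place where care is needed.
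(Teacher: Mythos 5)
Your proof is correct and follows the same route the paper takes: the paper simply asserts that $F_{m,p}(x)$ decreases monotonically in $p$ and reads the lemma off from the LSH-family bounds on the per-position collision probability $\Pr_{h\in\mathcal{H}}[h(o)=h(q)]$. You additionally supply the two details the paper leaves implicit --- the identification of $|LCCS(H(o),H(q))|$ with the longest cyclic run of $1$s in the i.i.d.\ collision-indicator vector, and the coupling argument showing that this longest-run statistic is stochastically increasing in $p$ --- so nothing is missing.
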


According to Lemma \ref{lemma:amplification}, if we want to demonstrate that LCCS-LSH enjoys the $(R,c)$-NNS with constant probability, we first need to study the property of $F_{m, p}(x)$. 

According to \cite{gordon1986extreme}, the longest consecutive heads in $n$ coin tosses with $\Pr[Head] = p$ can be asymptotically estimated by the largest value of $n(1-p)$ i.i.d. random variables that follow the exponential distribution. 
Thus, for a sufficiently large $m$, if we follow the similar constructions to LCCS-LSH except for the first random variable, $F_{m, p}(x)$ can also be modeled by the largest value of $m(1-p)$ i.i.d. random variables that follow the exponential distribution. Hence, 
\begin{lemma}
\label{lemma:prob_lccs}
Let $\hat{F}_{p}(x) = \exp(-p^{x})$ be the CDF of the extreme value distribution of $x$. As $m \rightarrow \infty$, $\forall x$, 
\begin{displaymath}
F_{m,p}(x) - \hat{F}_{p}(x-\log_{1/p}(m(1-p))) \rightarrow 0.
\end{displaymath}
\end{lemma}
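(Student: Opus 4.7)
\medskip

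\noindent\textbf{Proof proposal for Lemma \ref{lemma:prob_lccs}.} The plan is to reduce the length of the LCCS to the length of the longest run of heads in a sequence of i.i.d.\ Bernoulli trials, and then invoke Gordon's extreme-value approximation for long runs. First, for two strings $T=[t_1,\ldots,t_m]$ and $Q=[q_1,\ldots,q_m]$ with $\Pr[t_i=q_i]=p$ independently across $i$, I would define the match indicator $X_i=\mathbf{1}[t_i=q_i]$, so that $X_1,\ldots,X_m$ are i.i.d.\ Bernoulli($p$). By Definition~\ref{def:ccs} and Fact~\ref{fact:lcp}, $|LCCS(T,Q)|$ is precisely the length of the longest consecutive run of $1$'s in the circular sequence $(X_1,\ldots,X_m)$. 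Thus $F_{m,p}(x)$ is the CDF of this longest circular run.

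Next, I would pass from the circular version to the linear version at a cost that vanishes as $m\to\infty$. Specifically, letting $R_m$ denote the longest run of $1$'s in the linear sequence $X_1,\ldots,X_m$ and $R_m^{\circ}$ the longest circular run, $R_m\le R_m^{\circ}\le R_m+R_m'$, where $R_m'$ is the length of any run that happens to wrap around the boundary. Since $R_m=O(\log m)$ with high probability while the wrap-around contribution is stochastically at most an independent geometric tail, its relative contribution to the extreme-value scale is $o(1)$. This is the step that the authors flag when they say ``except for the first random variable,'' and it is where the circular-versus-linear subtlety has to be handled cleanly.

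The core step is the Gordon--Schilling--Waterman style approximation for $R_m$. Let $x$ be an integer and consider the number $N_x$ of indices $i\in\{2,\ldots,m-x+1\}$ at which a fresh run of length $\ge x$ starts, i.e.\ $X_{i-1}=0$ and $X_i=\cdots=X_{i+x-1}=1$. Then $\mathbb{E}[N_x]=(m-x)(1-p)p^x$, and a standard Chen--Stein or second-moment argument (as in \cite{gordon1986extreme}) shows that $N_x$ is asymptotically Poisson with this mean. Consequently
\begin{equation*}
\Pr[R_m<x]=\Pr[N_x=0]+o(1)=\exp\bigl(-m(1-p)p^{x}\bigr)+o(1).
\end{equation*}
Substituting $x\leftarrow x+\log_{1/p}(m(1-p))$ converts the exponent to $-p^{x}$, which is exactly $\hat{F}_p(x)$. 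Combined with the boundary reduction above, this gives $F_{m,p}(x+\log_{1/p}(m(1-p)))-\hat{F}_p(x)\to 0$, which is equivalent to the statement after a change of variable.

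I expect the main obstacle to be the Poisson approximation for $N_x$ with non-integer or slowly growing $x$: one must control the dependence between nearby run-starts (adjacent indicators share $X$-values) and show that the error in the total variation approximation is $o(1)$ uniformly in the relevant range of $x$. Once that declumping bound is in hand, the passage from the linear to the circular sequence and the re-centering by $\log_{1/p}(m(1-p))$ are straightforward.
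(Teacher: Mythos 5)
Your proposal is correct and follows the same route as the paper: the paper's entire proof is a one-line appeal to Theorem 1 of \cite{gordon1986extreme} (the $k=0$ case), which is precisely the longest-head-run extreme-value approximation you invoke after identifying $|LCCS(T,Q)|$ with the longest circular run of the match indicators. The extra material you supply --- the circular-to-linear comparison and the Poisson/declumping argument for $N_x$ --- is a faithful unpacking of what that cited theorem contains rather than a different argument.
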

\begin{proof}
Lemma \ref{lemma:prob_lccs} holds due to the Theorem 1 of \cite{gordon1986extreme}. It corresponds to the case when $k=0$ \cite{gordon1986extreme}.
\end{proof}

\begin{theorem}
\label{theorem:probability}
Given a distance metric $Dist(\cdot,\cdot)$ that admits an $(R,cR,p_1,p_2)$-sensitive LSH family $\mathcal{H}$, the LCCS-LSH scheme with hash length $m$ can answer the $(R,c)$-NNS over $Dist(\cdot,\cdot)$ by conducting $\lambda$-LCCS search with a probability at least $1/4$, where $\lambda=m^{1-1/\rho} n (1-p_1)^{-1/\rho} (1-p_2)(\ln 2)^{1/\rho}/p_2=O(m^{1-1/\rho}n)$ and $\rho=\ln(1/p_1)/\ln(1/p_2)$.
\end{theorem}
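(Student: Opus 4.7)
The plan is to mimic the standard two-step LSH argument with the LCCS length playing the role of the collision count: (i) pick a threshold length $x^*$ so that, for a near neighbor $o^*$, $|LCCS(H(o^*), H(q))|$ exceeds $x^*$ with constant probability, and (ii) bound the expected number of far-apart objects whose LCCS length also exceeds $x^*$, so that $\lambda$ can be taken just large enough to absorb them and still contain a valid $c$-approximate answer.

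First I would set $x^* = \log_{1/p_1}(m(1-p_1)/\ln 2)$. By Lemma \ref{lemma:prob_lccs}, as $m \to \infty$,
\begin{displaymath}
F_{m, p_1}(x^*) \to \hat{F}_{p_1}\!\left(x^* - \log_{1/p_1}\!\left(m(1-p_1)\right)\right) = \exp(-\ln 2) = \tfrac{1}{2},
\end{displaymath}
so Lemma \ref{lemma:amplification} yields $\Pr[|LCCS(H(o^*), H(q))| > x^*] \geq 1/2$ for every $o^* \in B(q, R)$.

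The key algebraic step is to rewrite $p_2^{x^*}$ in terms of $\rho$. Using $\log_{1/p_1}(p_2) = -1/\rho$ I compute $p_2^{x^*} = (m(1-p_1)/\ln 2)^{-1/\rho}$, so
\begin{displaymath}
\mu := m(1-p_2)\, p_2^{x^*} = m^{1-1/\rho}(1-p_2)(1-p_1)^{-1/\rho}(\ln 2)^{1/\rho}.
\end{displaymath}
The inequality $1 - e^{-\mu} \leq \mu$ together with Lemma \ref{lemma:amplification} then gives, for each $o \in \bar{B}(q, cR)$, $\Pr[|LCCS(H(o), H(q))| > x^*] \leq \mu$. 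Hence the expected number of far objects beating $x^*$ is at most $n\mu$, and Markov's inequality implies that with probability at least $1 - p_2$ this count is strictly less than $n\mu/p_2 = \lambda$.

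To finish, I would observe that the two good events combined imply $(R, c)$-NNS correctness: if $o^*$ beats $x^*$ and fewer than $\lambda$ far objects do so, then either $o^*$ is itself among the top-$\lambda$ LCCS candidates, or every one of those $\lambda$ candidates has $|LCCS| > x^*$ — in which case at least one must lie in $B(q, cR)$ and is therefore a valid $c$-approximate answer. A union bound yields overall success probability at least $1/2 - p_2 \geq 1/4$, which gives the theorem in the regime $p_2 \leq 1/4$ expected of any nontrivial LSH instance. The main obstacle I anticipate is a careful treatment of the $o(1)$ error terms from Lemma \ref{lemma:prob_lccs}: the computations above treat the extreme-value asymptotics as exact equalities, so one must argue that the error decays quickly enough in $m$ that the constants $1/2$ and $\mu$ above survive; a secondary subtlety is making the constants align to reproduce the exact closed form of $\lambda$ rather than only a $\Theta(\cdot)$ bound.
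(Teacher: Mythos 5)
Your proposal is correct and even recovers the paper's exact closed form for $\lambda$, but your treatment of the far-apart objects takes a genuinely different route. The paper works with quantiles of the limiting CDF $\hat{F}_{m,p}$: it compares the median $x_{1/2,p_1}$ (identical to your $x^*$) against the $(1-k/n)$-quantile $x_{1-k/n,p_2}$, argues via the Central Limit Theorem that the $k$-th largest LCCS among the $n$ far objects falls below that quantile with probability at least $1/2$, and unwinds the condition $x_{1/2,p_1}\geq x_{1-k/n,p_2}+1$ into the bound on $k$; the final $1/4$ comes from multiplying the two $1/2$'s. You instead bound the first moment of the number of far objects exceeding $x^*$ (via $1-e^{-\mu}\leq\mu$) and apply Markov's inequality at threshold $n\mu/p_2$, then take a union bound. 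Your route is more elementary and more robust: it needs only the marginal bounds of Lemma~\ref{lemma:amplification} and linearity of expectation, so it does not require the LCCS lengths of distinct far objects to be independent (they are not, since they all involve $H(q)$), whereas the paper's order-statistic/CLT step implicitly does, as does its product $\tfrac12\cdot\tfrac12$ of two dependent events. The price is the final constant: your union bound yields $1/2-p_2$, which is $\geq 1/4$ only when $p_2\leq 1/4$ --- a condition the theorem does not impose and which can fail (e.g., for the random projection family with large $w$). This is easily repaired by applying Markov at threshold $4n\mu$ instead of $n\mu/p_2$, giving success probability $\geq 1-\tfrac12-\tfrac14=\tfrac14$ at the cost of replacing the factor $1/p_2$ in $\lambda$ by $4$ (the same $O(m^{1-1/\rho}n)$ order). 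Both proofs share the same unaddressed looseness around the $o(1)$ error in Lemma~\ref{lemma:prob_lccs}, which you correctly flag.
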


\begin{proof}
Let $\hat{F}_{m, p}(x)=\hat{F}_{p}(x-\log_{1/p}(m(1-p)))$. The median of $\hat{F}_{m, p}(x)$, $x_{1/2, p}$, can be computed as
\begin{equation}
\label{eqn:x-median}
x_{1/2, p} =\log_{p}(\ln (2) ) + \log_{1/p} m(1-p), 
\end{equation}
and the $(1-k/n)$ quantile of $\hat{F}_{m, p}(x)$ can be computed as
\begin{equation}
\label{eqn:x-k-quantile}
x_{1-k/n, p} = \log_{p}(-\ln (1-k/n) ) + \log_{1/p} m(1-p).
\end{equation}
	
Consider the case of verifying $k$ candidates from the $k$-LCCS search of $H(q)$. 
For a sufficiently large $n$, according to Lemma \ref{lemma:amplification} and the Central Limit Theory, the $k^{th}$ longest LCCS between $H(o_i)$ and $H(q)$ for $n$ objects $o_i\in \bar{B}(q, cR)$ is less than $x_{1-k/n, p_2}$ with a probability at least $1/2$. 
In addition, according to Lemma \ref{lemma:amplification}, any object $o^*\in B(q, R)$ has a longer LCCS than $x_{1/2, p_1}$ with a probability at least $1/2$. If the condition $x_{1/2, p_1}\geq x_{1-k/n, p_2}+1$ holds, there will be at least one object $o^*$ appeared in the $k$ candidates from $k$-LCCS search of $H(q)$ with a probability at least $1/4$. 
According to Equations \ref{eqn:x-median} and \ref{eqn:x-k-quantile}, when $n \rightarrow \infty$, the condition
\begin{displaymath}
\begin{split}
            & x_{1/2, p_1}\geq x_{1-k/n, p_2} +1\\
\impliedby  & -\ln(1-k/n) \geq  m^{1-1/\rho} (1-p_1)^{-1/\rho} (1-p_2) (\ln 2)^{1/\rho}/p_2\\
\impliedby  & k/n \geq  m^{1-1/\rho} (1-p_1)^{-1/\rho} (1-p_2) (\ln 2)^{1/\rho}/p_2\\
\impliedby  & k \geq  m^{1-1/\rho} n (1-p_1)^{-1/\rho} (1-p_2) (\ln 2)^{1/\rho}/p_2.
\end{split}
\end{displaymath}
Thus, by setting $\lambda = m^{1-1/\rho} n (1-p_1)^{-1/\rho} (1-p_2)(\ln 2)^{1/\rho}/p_2$, with a probability at least $1/4$: if $B(q,R) \neq \emptyset$, LCCS-LSH can get at least one $o \in B(q,cR)$ by conducting a $\lambda$-LCCS search; if $B(q,cR)=\emptyset$, LCCS-LSH can trivially return nothing as no candidate from $\lambda$-LCCS search is in $B(q,R)$. 
\end{proof}

\subsection{Space and Time Complexities}
\label{sect:analysis:complexity}
LCCS-LSH is LSH-family-independent and it can handle various kinds of distance metrics. Thus, we first discuss the complexities of distance computation and the computation of hash values. For simplicity, we assume the computation of $Dist(\cdot,\cdot)$ takes $O(d)$ time. The complexity of the computation of hash values for different LSH families is different. We assume computing each hash value takes $O(\eta(d))$ time. For example, the random projection LSH family \cite{datar2004locality} takes $O(d)$ time, the cross polytope LSH family \cite{terasawa2007spherical,andoni2015practical} requires $O(d \log d)$ time, whereas the random bits sampling LSH family \cite{indyk1998approximate} for Hamming distance only needs $\eta(d)=O(1)$.

According to Theorem \ref{theorem:klccs}, the query time complexity of Algorithm \ref{alg:query} is $O(\log n + (m + \lambda) \log m)$, and the space and indexing time complexities of Algorithm \ref{alg:indexing} are $O(mn)$ and $O(mn \log(n))$, respectively. In addition, in the indexing phase of LCCS-LSH, computing $nm$ hash values for $n$ data objects takes $O(nm \cdot \eta(d))$ time. In the query phase of LCCS-LSH, computing $m$ hash values for each query takes $O(m \cdot \eta(d))$ time and computing the actual distance for $\lambda$ candidates takes $O(\lambda d)$ time. According to Theorem \ref{theorem:probability}, $\lambda$ is determined by $m$ and $n$. By setting different $m$ values, LCCS-LSH has different time and space complexities. Thus, we introduce a parameter $\alpha$ to control the value of $m$ in different scales. According to Theorems \ref{theorem:klccs} and \ref{theorem:probability}, we have
\begin{corollary}
\label{corollary:lccs-lsh-complexity}
For any $0 \leq \alpha \leq \frac{1}{1-\rho}$, setting $m=O(n^{\alpha \rho})$, LCCS-LSH can answer the $(R,c)$-NNS with a probability at least $1/4$ using $O(n^{1+\alpha\rho})$ space, $O(n^{1+\alpha\rho} (\eta(d)+\log n))$ indexing time, and $O(n^{\alpha\rho}(\eta(d)+\alpha \log n) + n^{\alpha(\rho-1)+1}(d+\alpha \log n) )$ query time.
\vspace{-0.5em}
\end{corollary}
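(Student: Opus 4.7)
The plan is essentially a direct substitution: plug $m = O(n^{\alpha\rho})$ into the complexity expressions from Theorem~\ref{theorem:klccs} and Theorem~\ref{theorem:probability}, then bound each resulting term. I will not need to revisit the probabilistic analysis or the CSA correctness, since the $1/4$ success bound comes for free from Theorem~\ref{theorem:probability} as long as the chosen $\lambda$ dominates the expression there.

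First, I would fix $m = \Theta(n^{\alpha\rho})$ and invoke Theorem~\ref{theorem:probability} to obtain
\[
\lambda \;=\; O(m^{1-1/\rho} n) \;=\; O\!\bigl(n^{\alpha\rho(1-1/\rho)} \cdot n\bigr) \;=\; O\!\bigl(n^{1+\alpha(\rho-1)}\bigr),
\]
and note that the condition $0 \leq \alpha \leq \frac{1}{1-\rho}$ is exactly what guarantees $1 + \alpha(\rho-1) \geq 0$, so that $\lambda \geq 1$ and the $\lambda$-LCCS search is well defined (at the endpoints we recover the $\alpha=0$ linear-scan regime with $\lambda=O(n)$ and the $\alpha=\frac{1}{1-\rho}$ extreme with $\lambda=O(1)$ already listed in Table~\ref{table:time_complexity}).

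Next, I would collect the space and indexing costs. By Theorem~\ref{theorem:klccs}, the CSA uses $O(nm)=O(n^{1+\alpha\rho})$ space, and computing all $nm$ hash values adds $O(nm\cdot\eta(d))$, giving an indexing time of
\[
O(mn\log n) + O(nm\cdot\eta(d)) \;=\; O\!\bigl(n^{1+\alpha\rho}(\eta(d)+\log n)\bigr),
\]
which matches the claim. For the query phase, the three contributions are: (i) the hash-value computation for $q$, costing $O(m\cdot\eta(d)) = O(n^{\alpha\rho}\eta(d))$; (ii) the $\lambda$-LCCS search itself, which by Theorem~\ref{theorem:klccs} takes $O(\log n + (m+\lambda)\log m)$; and (iii) the exact distance verification over $\lambda$ candidates, costing $O(\lambda d) = O(n^{1+\alpha(\rho-1)}d)$. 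Using $\log m = \alpha\rho \log n = O(\alpha \log n)$, term (ii) becomes $O(\log n + (n^{\alpha\rho} + n^{1+\alpha(\rho-1)}) \cdot \alpha\log n)$, and the stray $\log n$ is absorbed into $n^{\alpha\rho}\alpha\log n$. Summing (i)--(iii) yields
\[
O\!\bigl(n^{\alpha\rho}(\eta(d)+\alpha\log n) + n^{1+\alpha(\rho-1)}(d + \alpha\log n)\bigr),
\]
as claimed.

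The derivation is arithmetic rather than conceptual, so there is no real obstacle; the only point that requires a moment of care is verifying that the range $0 \leq \alpha \leq \frac{1}{1-\rho}$ is the correct domain, i.e., that below $\alpha=0$ the hash strings collapse and above $\alpha=\frac{1}{1-\rho}$ the required $\lambda$ drops below $1$ (so $k=1$ suffices and further enlarging $m$ merely wastes space). Everything else is direct bookkeeping against Theorems~\ref{theorem:klccs} and~\ref{theorem:probability}.
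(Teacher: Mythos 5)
Your proposal is correct and follows exactly the route the paper takes: Section~\ref{sect:analysis:complexity} itemizes the same component costs (CSA space/indexing from Theorem~\ref{theorem:klccs}, hashing costs $O(nm\,\eta(d))$ and $O(m\,\eta(d))$, verification $O(\lambda d)$, and $\lambda=O(m^{1-1/\rho}n)$ from Theorem~\ref{theorem:probability}) and the corollary is obtained by the same substitution $m=O(n^{\alpha\rho})$, $\log m = O(\alpha\log n)$. Your observation that the upper endpoint $\alpha=\frac{1}{1-\rho}$ is forced by $\lambda\geq 1$ matches the paper's own remark.
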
 

The upper-bound of $\alpha$ is $\frac{1}{1-\rho}$, because $\lambda$ is at least 1. There are three typical settings of $\alpha$: (i) $\alpha=0$: the query time complexity of LCCS-LSH is equivalent to the complexity of linear scan; (ii) $\alpha=1$: compared with E2LSH \cite{datar2004locality} and C2LSH \cite{gan2012locality}, LCCS-LSH enjoys the least query time complexity; moreover, LCCS-LSH has the same space complexity as E2LSH and its index time complexity is also lower than that of E2LSH; C2LSH enjoys the least space and indexing time complexities, but its query time complexity is the largest among the three methods; (iii) $\alpha=\frac{1}{1-\rho}$: LCCS-LSH verifies only constant number of candidates, and thus it is suitable to the case that computing hash values is much cheaper than computing actual distances, e.g., the random bits sampling LSH family for Hamming distance in a very high dimensional space. Setting $\alpha$ in between $0$ and $\frac{1}{1-\rho}$ can smoothly control the trade-off between space and time complexities. We summarize the space and time complexities of E2LSH, C2LSH, and LCCS-LSH in Table \ref{table:time_complexity}.

As discussed in Section \ref{sect:preliminary:problem}, to get a data structure for $c$-ANNS, one should build multiple data structures for $(R,c)$-NNS with different $R \in \{1,c,c^2,\cdots\}$. This is because given an $(R,cR,p_1,p_2)$-sensitive LSH family $\mathcal{H}$, the parameters $K$ and $L$ in the static concatenating search framework depends on $p_1$ and $p_2$, which might be different when considering different $R$ values. 
For LCCS-LSH, given a fixed $\rho$, since $p_1$ and $p_2$ only affect $m$ by constant factors, it is possible to build \emph{one index} to handle variant $R$ values without changing the asymptotic time complexity. Specifically, given an $(R,cR,p_1,p_2)$-sensitive LSH family $\mathcal{H}$, if $\mathcal{H}$ satisfies the condition that $\rho_R \leq \rho^* < 1$ for all considered $R$ values, LCCS-LSH can handle $c$-ANNS using the same asymptotic time and space complexity as $(R,c)$-NNS. For example, the cross-polytope LSH family \cite{terasawa2007spherical,andoni2015practical} satisfies this condition, because it has the property that $\rho_R=\frac{1}{c^2} \frac{4-c^2 R^2}{4-R^2} +o(1) \leq \rho^* = \frac{1}{c^2} + o(1)$ for all $R$ values according to Corollary 1 of \cite{andoni2015practical}. On the other hand, the random projection LSH family \cite{datar2004locality} does not satisfy this condition since $\rho$ could be arbitrarily close to 1 for certain $R$ values once $w$ is fixed.

\section{Experiments}
\label{sect:expt}
In this section, we study the performance of LCCS-LSH and MP-LCCS-LSH over five real-life datasets for high dimensional $c$-ANNS. All methods are implemented in C++ and are compiled with g++ 8.3 using -O3 optimization. We conduct all experiments in a single thread on a machine with 8 Intel i7-3820 @ 3.60GHz CPUs and 64 GB RAM, running on Ubuntu 16.04.
\vspace{-0.25em}

\subsection{Datasets and Queries}
\label{sec:expt:setup:dataset}
We use five real-life datasets in our experiments, which cover a wide range of data types, including audio, image, text, and deep-learning data. We randomly select $100$ objects from their test sets and use them as queries. The statistics of datasets and queries are summarized in Table \ref{table:dataset}.
\begin{itemize}
\item \textbf{Msong.}\footnote{\url{http://www.ifs.tuwien.ac.at/mir/msd/download.html}.} The Msong dataset is a collection  of about $1$ million $420$-dimensional audio features and metadata for a contemporary popular music tracks.

\item \textbf{Sift.}\footnote{\url{http://corpus-texmex.irisa.fr/}.} The Sift dataset has $1$ million $128$-dimensional image sift features.

\item \textbf{Gist.}\footnote{\url{http://corpus-texmex.irisa.fr/}.} Gist is a $960$-dimensional dataset with $1$ million image gist features.

\item \textbf{GloVe.}\footnote{\url{https://nlp.stanford.edu/projects/glove/}.} It contains about $1.2$ million $100$-dimensional text embedding features extracted from Tweets.

\item \textbf{Deep.}\footnote{\url{https://github.com/DBWangGroupUNSW/nns_benchmark}.} It is a $256$-dimensional dataset that contains $1$ million deep neural codes of images obtained from the activations of a convolutional neural network.
\end{itemize}
\vspace{-0.25em}

\subsection{Evaluation Metrics}
\label{sec:expt:setup:metric}
We use the following metrics for performance evaluation. 
\begin{itemize}
\item \textbf{Index Size and Indexing Time.} We use the index size and indexing time to evaluate the indexing overhead of a method. The index size is defined by the memory usage for a method to build index. Similarly, the indexing time is defined as the wall-clock time for a method to build index.

\item \textbf{Recall.} We use recall to measure the accuracy of a method. For the $c$-$k$-ANNS, it is defined as the fraction of the total amount of data objects returned by a method that are appeared in the exact $k$ NNs.

\item \textbf{Ratio.} Overall ratio (or simply ratio) is also a popular measure to access the accuracy of a method. For the $c$-$k$-ANNS, it is defined as $\frac{1}{k} \sum_{i = 1}^k \tfrac{Dist(o_i, q)}{Dist(o_i^*, q)}$, where $o_i$ is the $i^{th}$ nearest object returned by a method and $o_i^*$ is the exact $i^{th}$ NN, where $i \in \{1,2,\cdots,k\}$. Intuitively, a smaller overall ratio means a higher accuracy.

\item \textbf{Query Time.} We consider the query time to evaluate the efficiency of a method. It is defined as the wall-clock time of a method to conduct a $c$-$k$-ANNS.
\end{itemize}

We report the average recall and ratio over all queries, and we run each method for each experiment five times to report its average running time and indexing overhead.

\begin{table}[tb]
\centering
\caption{Statistics of datasets and queries}
\vspace{-0.5em}
\label{table:dataset}
\begin{tabular}{cccccc} \toprule
Datasets & $\#$Objects & $\#$Queries & $d$ & Data Size & Type  \\ \midrule
Msong    & 992,272     & 100         & 420 & 1.6 GB    & Audio \\ 
Sift     & 1,000,000   & 100         & 128 & 488.3 MB  & Image \\
Gist     & 1,000,000   & 100         & 900 & 3.6 GB    & Image \\
GloVe    & 1,183,514   & 100         & 100 & 451.5 MB  & Text  \\ 
Deep     & 1,000,000   & 100         & 256 & 976.6 MB  & Deep  \\ \bottomrule
\end{tabular}
\end{table}

\begin{figure*}[tb]
\centering
\includegraphics[width=0.98\textwidth]{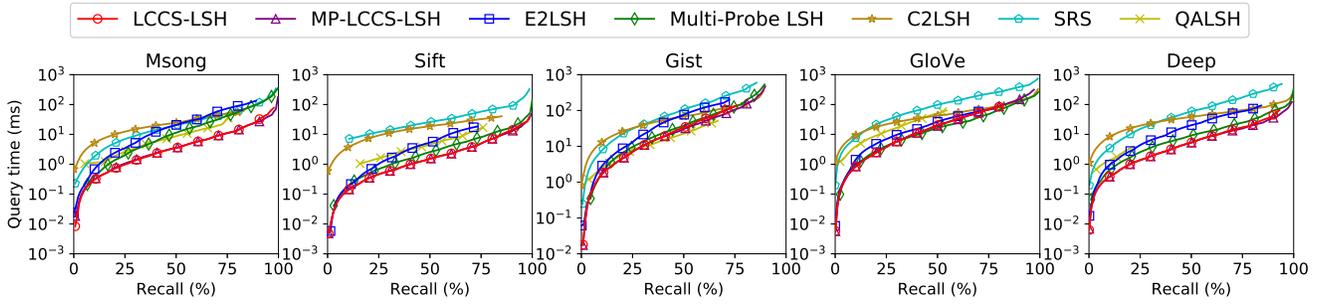}
\vspace{-0.5em}
\caption{Query time-recall curves (lower is better) of retrieving top-10 NNs under Euclidean distance.}
\label{fig:time-accuracy-euclidean}
\end{figure*}

\begin{figure*}[tb]
\centering
\includegraphics[width=0.98\textwidth]{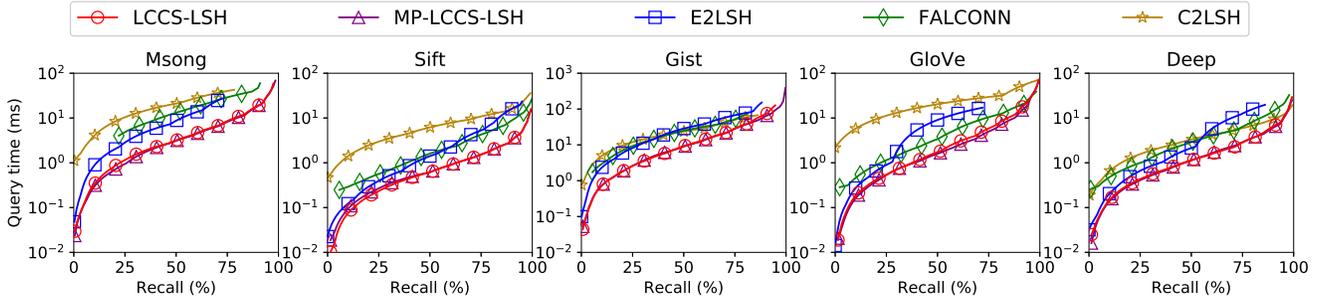}	
\vspace{-0.5em}
\caption{Query time-recall curves (lower is better) of retrieving top-10 NNs under Angular distance.}
\label{fig:time-accuracy-angular}
\vspace{-0.5em}
\end{figure*}

\subsection{Benchmark Methods}
\label{sec:expt:setup:benchmark}
Since LCCS-LSH is independent of LSH family and it supports $c$-ANNS with various kinds of distance metrics, we consider the random projection LSH family and cross-polytope LSH family and conduct experiments under two popular distance metrics, i.e., Euclidean distance and Angular distance. 

To make a fair comparison with different kinds of search framework, we select several state-of-the-art LSH schemes as benchmarks. Specifically, we evaluate the methods described as follows.
\begin{itemize}
\item \textbf{LCCS-LSH and MP-LCCS-LSH.} Both schemes adopt the LCCS search framework for $c$-$k$-ANNS. 
Compared to LCCS-LSH, we add an intelligent probing strategy to MP-LCCS-LSH to reduce the indexing overhead. 
We evaluate both schemes for $c$-$k$-ANNS under Euclidean distance and Angular distance, respectively.

\item \textbf{Multi-Probe LSH.} Multi-Probe LSH \cite{lv2007multi,dong2008modeling}  uses the static concatenating search framework with an intelligent probing strategy for $c$-$k$-ANNS. It is based on the random projection LSH family and is designed for Euclidean distance. We use a public implementation\footnote{\url{http://lshkit.sourceforge.net/}.} by the authors for performance evaluations.

\item \textbf{FALCONN.} Similar to Multi-Probe LSH, FALCONN \cite{andoni2015practical} also applies the static concatenating search framework with an intelligent probing strategy for $c$-$k$-ANNS. It is based on the cross polytope LSH family and is designed for Angular distance. We use a public implementation\footnote{\url{https://falconn-lib.org/}.} by the authors in the experiments.

\item \textbf{E2LSH.} E2LSH \cite{datar2004locality,andoni2005e2lsh} adopts the static concatenating search framework directly for $c$-$k$-ANNS. It is based on the random projection LSH family and is designed for Euclidean distance. To make a further comparison, we adapt it for Angular distance, where the LSH functions are drawn from the cross-polytope LSH family.

\item \textbf{C2LSH.} C2LSH \cite{gan2012locality} applies a dynamic collision counting framework for $c$-$k$-ANNS. Similar to E2LSH, it is designed for Euclidean distance. We also adapt it for Angular distance, where the LSH functions are drawn from the cross-polytope LSH family.

\item \textbf{SRS.} SRS \cite{sun2014srs} is a state-of-the-art LSH-based method which is designed for Euclidean distance. It converts data objects into low dimensions based on random projection and indexes them by a single R-tree for $c$-$k$-ANNS. We use its memory version\footnote{\url{https://github.com/DBWangGroupUNSW/SRS}.} with cover-tree for comparison.

\item \textbf{QALSH.} Similar to C2LSH, QALSH \cite{huang2017query} applies the dynamic collision counting framework for $c$-$k$-ANNS and it is designed for Euclidean distance. For the million-scale datasets, we use its memory version QALSH$^+$\footnote{\url{https://github.com/HuangQiang/QALSH_Mem}.} for comparison to reduce false positives.
\end{itemize}

For the $c$-$k$-ANNS, we set $k \in \{1,2,5,10,20,50,100\}$. 
To make a fair comparison, we fix the maximum number of LSH functions for all methods. Specifically, we set $K \in \{1,2$, $3,\cdots,10\}$ and $L \in \{8,16,32,\cdots,512\}$ for E2LSH, Multi-Probe LSH, and FALCONN such that $KL \leq 512$; we set $m \in \{8$, $16,32,\cdots,512\}$ and $l \in \{2,3,\cdots,10\}$ for C2LSH; we set the projected dimensions $d'\in \{4, 5, \cdots, 10\}$ for SRS; for QALSH, we adopt QALSH$^+$\footnote{We use kd-tree to split dataset into blocks and set $leaf=20,000$. We set $L \in \{10, 20, 30, 40\}$ projections and $L*M \in \{100, 240, 400\}$ boundary objects for each block as representative objects to determine close blocks.} and set $c\in \{2.0, 3.0, 4.0\}$ for each block to build QALSH. 
For LCCS-LSH and MP-LCCS-LSH, we set $m \in \{8$, $16,32,\cdots,512\}$ and $\#probes \in \{1,m+1,2m+1,4m+1,8m+1\}$. $w$ is fine-tuned for the random projection LSH family,\footnote{Specifically, $w$ is set to be 18.75, 226.0, 11294.0, 4.65, and 0.66 for the datasets Msong, Sift, Gist, GloVe, and Deep, respectively.} so that all methods achieve their best performance. 

\begin{figure*}[tb]
\centering
\includegraphics[width=1.0\textwidth]{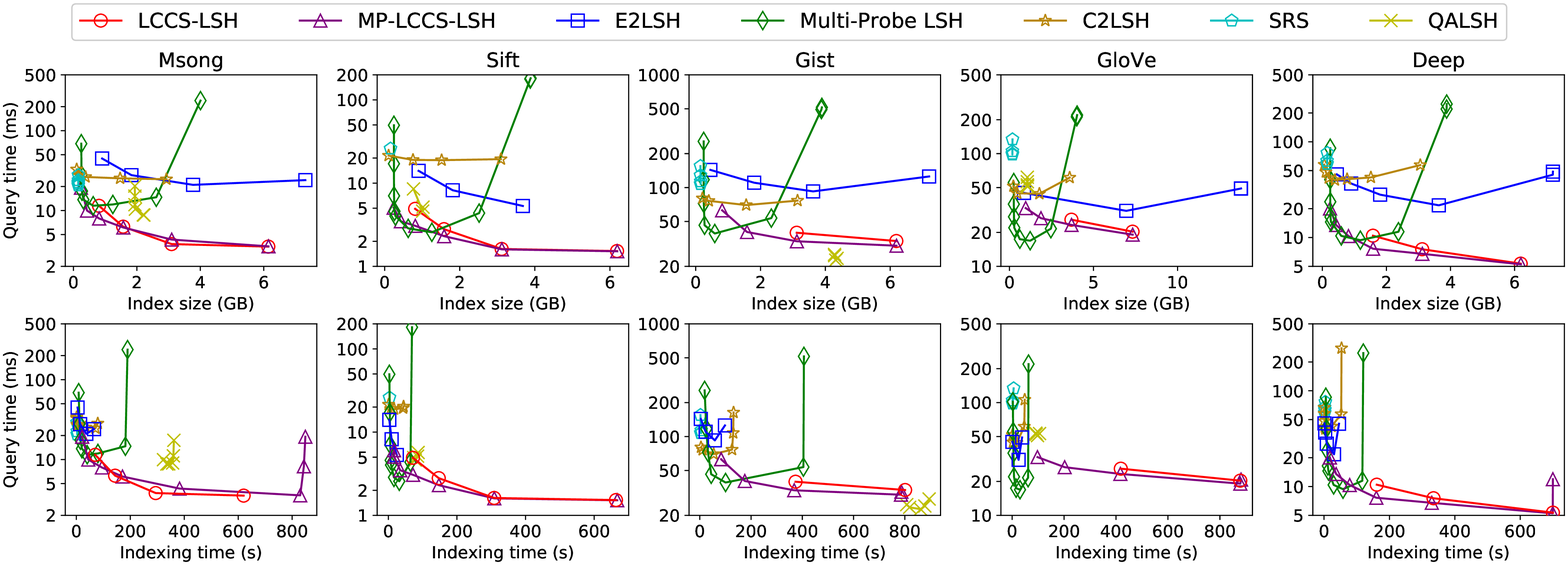}
\caption{Query time-Index size curves and Query time-Indexing time curves of retrieving top-10 NNs at $50\%$ recall level under Euclidean distance.}
\label{fig:time-index-euclidean}
\end{figure*}

\begin{figure*}[tb]
\centering
\includegraphics[width=1.0\textwidth]{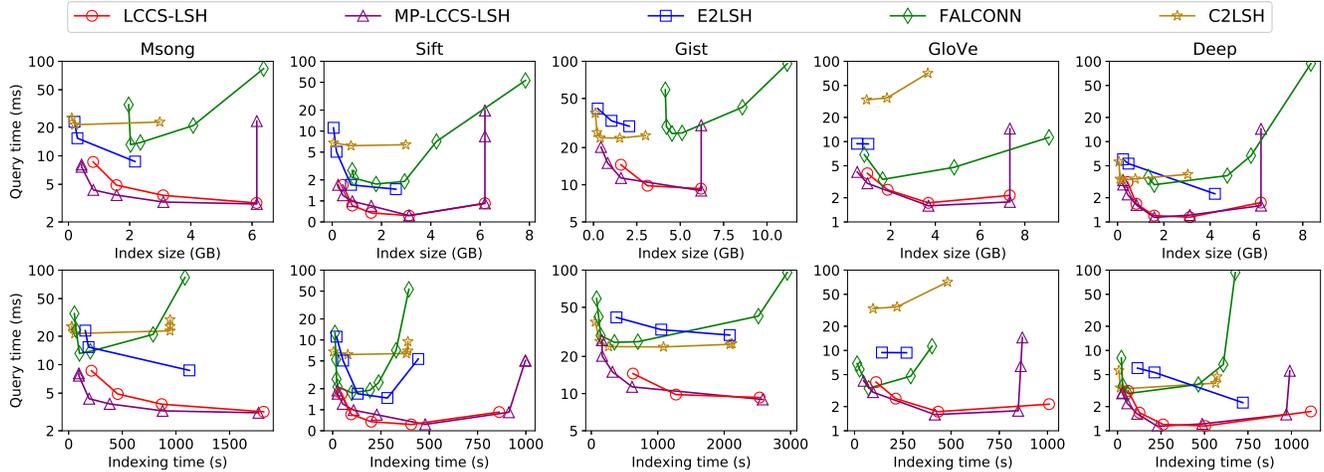}
\caption{Query time-Index size curves and Query time-Indexing time curves of retrieving top-10 NNs at $50\%$ recall level under Angular distance.}
\label{fig:time-index-angular}
\end{figure*}

\subsection{Results and Analysis}
\label{sec:expt:analysis}
We study the performance of LCCS-LSH and MP-LCCS-LSH in terms of five aspects: the query performance, indexing performance, the sensitivity to $k$, the impact of $m$, and the impact of $\#probes$. 

\paragraph{Query Performance}
We first study the query performance of LCCS-LSH and MP-LCCS-LSH. Different number of candidates and probes are used for all methods to achieve different recall levels. To remove the impact of parameters for each method, we report their lowest query time for all combinations of parameters under each certain recall level using grid search. We consider $k = 10$. The query time-recall curves under Euclidean distance and Augular distance are shown in Figures \ref{fig:time-accuracy-euclidean} and \ref{fig:time-accuracy-angular}, respectively. Similar trends can be observed from other $k$ values.

From Figure \ref{fig:time-accuracy-euclidean}, we observe LCCS-LSH and MP-LCCS-LSH achieve the best or nearly the best performance under Euclidean distance. 
Compared with E2LSH, QALSH, and Multi-Probe LSH, even though they are close to each other over Gist and GloVe, LCCS-LSH and MP-LCCS-LSH achieve around $240\%$ acceleration over Msong, $70\%$ acceleration over Sift, and $80\%$ acceleration over Deep under certain recall level. These results also demonstrate the efficiency of CSA for the LCCS search framework in the sense that identifying objects with the maximum length of LCCS is as efficient as hash table lookups. Furthermore, Multi-Probe LSH enjoys a slightly better trade-off between efficiency and accuracy than E2LSH, which satisfies the observations from \cite{datar2004locality, andoni2005e2lsh}. 
Compared with C2LSH and SRS, both LCCS-LSH and MP-LCCS-LSH achieve at least one order of magnitude acceleration under certain recall level for all of the five datasets, because the query time complexity of C2LSH is much worse than that of LCCS-LSH and SRS makes use of tree-based method to retrieve the candidates which is not as efficient as CSA. The performance of LCCS-LSH and MP-LCCS-LSH are close to each other. This is because although MP-LCCS-LSH introduces more probing overhead, it checks for fewer candidates than LCCS-LSH at the same recall level due to its intelligent probing strategy.

From Figure \ref{fig:time-accuracy-angular}, similar to the results under Euclidean distance, the performance of LCCS-LSH and MP-LCCS-LSH under Augular distance is better than those of other methods among all datasets, and their advantages are more apparent. Specifically, LCCS-LSH and MP-LCCS-LSH achieve at least $100\%$ acceleration compared with the second fastest competitor under $50\%$ recall level for all datasets. 
Furthermore, the performance of FALCONN is sightly better than that of E2LSH, especially when the recall level is high, which also fits the observations from \cite{andoni2015practical}. The query time-ratio curves show similar trends to the query time-recall curves. To be concise, we omit those results here.

\begin{figure}[tb]
\centering
\includegraphics[width=0.47\textwidth]{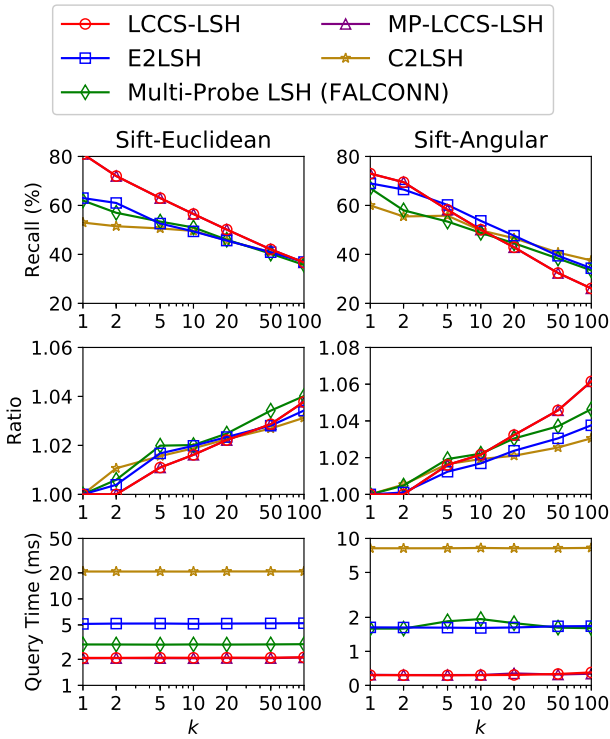}
\vspace{-1.0em}
\caption{Query performance vs.\ $k$.}
\label{fig:query-vs-k}
\vspace{-0.5em}
\end{figure}

\paragraph{Indexing Performance}
We then study the indexing performance of LCCS-LSH and MP-LCCS-LSH. We continue to consider $k = 10$. Since different parameters are used at different recall levels, we present the lowest query time under different index size (or indexing time) of all methods at $50\%$ recall level, to show the trade-off between query time and index size (or indexing time). The results under Euclidean distance and Angular distance are displayed in Figures \ref{fig:time-index-euclidean} and \ref{fig:time-index-angular}, respectively.\footnote{We do not show the results if the methods do not achieve $50\%$ recall level in Figures \ref{fig:time-accuracy-euclidean} and \ref{fig:time-accuracy-angular}.} Similar trends can be observed from other recall levels.

\begin{figure}[t]
\centering
\includegraphics[width=0.47\textwidth]{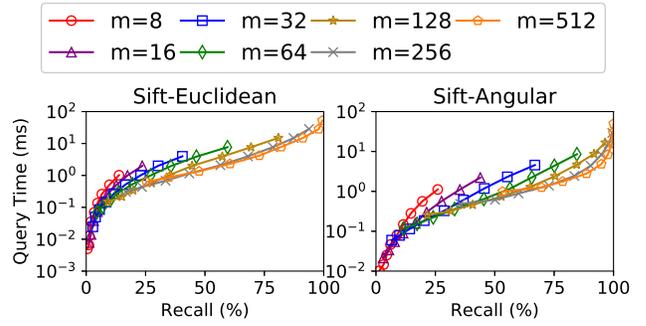}
\vspace{-1.0em}
\caption{Impact of\ $m$ for LCCS-LSH.}
\label{fig:query-vs-m}
\end{figure}

\begin{figure}[t]
\centering
\includegraphics[width=0.47\textwidth]{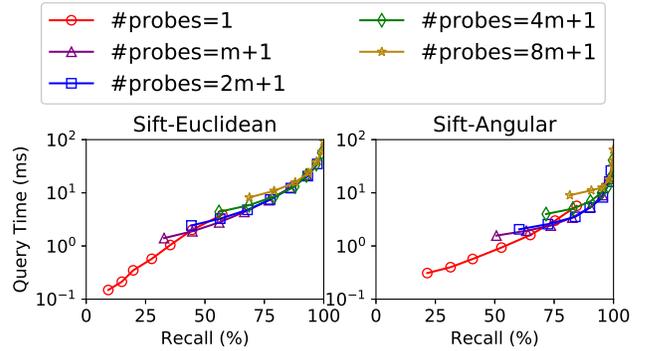}
\vspace{-1.0em}
\caption{Impact of\ $\#probes$ for MP-LCCS-LSH.}
\label{fig:query-vs-p}
\vspace{-1.0em}
\end{figure}

Figure \ref{fig:time-index-euclidean} shows that MP-LCCS-LSH enjoys better trade-off between query time and indexing overhead than LCCS-LSH, especially when only few memory is used. This means that the intelligent probing strategy of MP-LCCS-LSH can help to produce more candidates efficiently when $m$ is relatively small. 
Among all of the seven methods, Multi-Probe LSH is competitive in terms of the trade-off between query time and index size as it is designed to save space without losing too much information. For Gist and GloVe, Multi-Probe LSH uses less indexing overhead than MP-LCCS-LSH under the same query time, whereas for Msong, Sift, and Deep, MP-LCCS-LSH takes less query time under the same indexing budget if more memory is allowed. Compared to other competitors, MP-LCCS-LSH enjoys a better trade-off between query time and indexing overhead. The reasons are as follows: due to the static concatenating search framework, E2LSH cannot share LSH functions between different hash tables, which leads to a large indexing overhead; C2LSH, SRS, and QALSH can achieve good performance with small indexing budget, but they cannot significantly reduce the query time when more memory is allowed. 

We also observe that increasing index size (and indexing time) cannot always reduce the query time of all methods, because certain index size is good enough to achieve $50\%$ recall. In this case, using more hash functions will introduce more memory and more indexing time. Similar pattern can be observed from Figure \ref{fig:time-index-angular} under Angular distance. 

\paragraph{Sensitivity to $\bm{k}$}
Next, we study sensitivity to $k$ for the query performance of LCCS-LSH and MP-LCCS-LSH in terms of recall, ratio, and query time. We consider $k \in \{1,2,5,10,20$, $50$, $100\}$. To remove the impact of parameters for each method, we present their best query performance vs. $k$ for all combinations of parameters under the similar recall levels. The results of all methods over Sift under Euclidean distance and Angular distance are shown in Figure \ref{fig:query-vs-k}. Similar trends can be observed from other datasets. 

From Figure \ref{fig:query-vs-k}, except for C2LSH, the slope of LCCS-LSH and MP-LCCS-LSH on $k$ is similar to those of other competitors. Thus, LCCS-LSH and MP-LCCS-LSH are at least as stable as other methods. For C2LSH, it is more stable than others, because it requires $O(n)$ query time to carefully select candidates and it is inefficient. Furthermore, under the similar recall levels, the ratios of all methods are close to each other, whereas LCCS-LSH and MP-LCCS-LSH enjoys less query time than other competitors, which is consistent with the results presented in Figures \ref{fig:time-accuracy-euclidean} and \ref{fig:time-accuracy-angular}.

\paragraph{Impact of $\bm{m}$}
We study the impact of $m$ for LCCS-LSH. Figure \ref{fig:query-vs-m} shows the query time at different recall levels of LCCS-LSH by setting different $m \in \{8, 16, 32, 64, 128, 256, 512\}$ for Sift dataset under Euclidean distance and Angular distance. Similar trends can be observed for other datasets. 

As can be seen from Figure \ref{fig:query-vs-m}, LCCS-LSH achieves different trade-off between query time and recall under different settings of $m$, and in general, a larger $m$ lead to less query time at the same recall levels especially when the recalls are high. Furthermore, at certain recall level, e.g., $25\%$ for Sift under Euclidean distance, increasing $m$ will no longer decrease the query time. It means that at this recall level, the corresponding $m$ is optimal among all considered $m$'s for LCCS-LSH, e.g., $m=256$ is optimal for Sift at $25\%$ recall.

\paragraph{Impact of $\bm{\#probes}$}
Finally, we study the impact of $\#probes$ for MP-LCCS-LSH. We set $m=128$ and consider $\#probes$ ranging from $\{1, m+1, 2m+1, 4m+1, 8m+1\}$.\footnote{MP-LCCS-LSH is equivalent to LCCS-LSH when $\#probes=1$.} Figure \ref{fig:query-vs-p} shows the query time of MP-LCCS-LSH over Sift at different recall levels for different $\#probes$. Similar trends can be observed from other $m$'s and other datasets.

From Figure \ref{fig:query-vs-p}, we observe that MP-LCCS-LSH can accelerate the $c$-$k$-ANNS of LCCS-LSH at relatively high recall levels, where LCCS-LSH needs to check more candidates than MP-LCCS-LSH. However, for the lower recall levels, since the cost of each probe is higher than the time spent on verification, LCCS-LSH is better than MP-LCCS-LSH. This also confirms the results from Figures \ref{fig:time-index-euclidean} and \ref{fig:time-index-angular} that MP-LCCS-LSH can reduce indexing overhead of LCCS-LSH but can hardly improve the query time when LCCS-LSH uses sufficient memory.
\vspace{-0.5em}

\subsection{Summary}
\label{sec:expt:summary} 
Based on the experimental results, we have three important observations. 
Firstly, both LCCS-LSH and MP-LCCS-LSH are able to answer $c$-$k$-ANNS under Euclidean distance and Angular distance, which verifies their flexibilities to support various kinds of distance metrics. 
Secondly, both LCCS-LSH and MP-LCCS-LSH outperforms state-of-the-art methods, such as Multi-Probe LSH, FALCONN, E2LSH, and C2LSH. Specifically, LCCS-LSH and MP-LCCS-LSH enjoy a better trade-off between efficiency and accuracy than other competitors. In addition, in most of datasets, they also have a better trade-off between the query time and indexing overhead than Multi-Probe LSH, FALCONN, E2LSH and C2LSH. Finally, MP-LCCS-LSH is better than LCCS-LSH. Both schemes have almost the same trade-off between efficiency and accuracy, but MP-LCCS-LSH enjoys a better trade-off between query time and indexing overhead than LCCS-LSH.

\section{Related Work}
\label{sect:related_work}
NNS is a classic problem and is ubiquitous in various fields. 
The exact NNS in low-dimensional space is well solved by the tree-based methods \cite{guttman1984rtree,bentley1990k,katayama1997sr}. Due to the ``curse of dimensionality," these solutions cannot scale up to high dimensional space. 
Since the schemes we proposed are LSH-based methods, we focus on LSH schemes for high-dimensional $c$-ANNS.

LSH was originally introduced by Indyk and Motwani \cite{indyk1998approximate} for Hamming space, and later was extended to other distance metrics such as Jaccard similarity \cite{broder1997resemblance}, Angular distance \cite{charikar2002similarity,terasawa2007spherical}, and $l_p$ distance \cite{datar2004locality}. Although extensive studies have been done for the LSH families, the search framework behind the LSH families is less investigated. Existing works for the search framework can be roughly divided into two categories: static concatenating search framework and dynamic collision counting framework and their variants.

The first category is the static concatenating search framework and its variants. This search framework is most widely used in the LSH literatures. There are two potential problems behind this search framework. Firstly, the setting of $K$ is sensitive to $R$, and hence it needs to be tuned for every dataset. Secondly, the theoretical $L$ is usually prohibitively large. Thus, many variants of this search framework have been proposed to address these two issues.

To make $K$ suitable for different $R$ values and datasets, LSH-Forest \cite{bawa2005lsh} concatenates hash values into a sequence instead of a single hash value, so that the LCP between the hash values of query and data objects can be found via a trie structure. LSB-Forest \cite{tao2009quality} uses a z-order curve to encode the hash values and uses a B-tree to index the hash codes. Hence, the $K$ value can be conceptually automatically decided for different $R$. Similarly, SK-LSH \cite{liu2014sk} sorts the compound keys in alphabetical order, and thus it can reduce the I/O costs for external storages. Compared with these methods, LCCS-LSH uses the data structure CSA to store hash values. Since CSA can reuse the hash values in every position, it carries more information than sequence and curves. From this perspective, LCCS-LSH can be considered to extend them by virtually building more trees. 
To reduce the large $L$, Multi-Probe LSH \cite{lv2007multi} is proposed to heuristically boost conceptual $L$ by probing more buckets without extra memory usage. FALCONN \cite{andoni2015practical} further demonstrates its effectiveness on another LSH family. We also propose a new Multi-Probe scheme MP-LCCS-LSH to boost the conceptual $L$ and reduce the indexing overhead.

Another category is the dynamic collision counting framework. C2LSH \cite{gan2012locality} uses the number of identical hash values that data objects and query collide as the indicator of their actual distance. QALSH \cite{huang2015query,huang2017query} further extends this idea by considering real number as ``hash value." The counting-based indicator, although can identify the near neighbors of query precisely, unavoidably requires to count a large number of false positives, which limits its scalability when $n$ is very large. In contrast, LCCS-LSH can be understood as a dynamic concatenating search framework. By leveraging CSA for $k$-LCCS search, LCCS-LSH is able to answer $c$-ANNS with \emph{sublinear} query time and sub-quadratic space.

\section{Conclusion}
\label{sect:conclusion}
In this paper, we introduce a novel LSH scheme LCCS-LSH for high-dimensional $c$-ANNS with a theoretical guarantee. 
We define a new concept of LCCS and propose a novel data structure CSA for $k$-LCCS search. CSA is potentially of separate interest for other fields of computer science. 
LCCS-LSH adopts the LCCS search framework to dynamically concatenate consecutive hash values, which yields a simple yet effective way to identify the close objects. 
It requires to tune a single parameter $m$ only, which is unavoidable for the trade-off between space and query time.  
In addition, we propose MP-LCCS-LSH to further reduce the indexing overhead. Extensive experiments over five real-life datasets demonstrate the superior performance of LCCS-LSH and MP-LCCS-LSH. 

\begin{acks}
This research is supported by the National Research Foundation, Singapore under its Strategic Capability Research Centres Funding Initiative and the National Research Foundation Singapore under its AI Singapore Programme. Any opinions, findings and conclusions or recommendations expressed in this material are those of the author(s) and do not reflect the views of National Research Foundation, Singapore.
\end{acks}

\balance
\bibliographystyle{ACM-Reference-Format}
\bibliography{FullPaper}

\end{document}